\newtheorem{theorem}{Theorem}
\newtheorem{lemma}{Lemma}
\newtheorem{corollary}{Corollary}
\newtheorem{remark}{Remark}
\newcommand{\enrique}[1]{\ifthenelse{\boolean{showcomments}}
{\textcolor{Red}{(Enrique says: #1)}}{}}
\newcommand{\addcite}[0]{\ifthenelse{\boolean{showcomments}}
{\textcolor{Purple}{(add cite(s))}}{}}
\newcommand{\addcites}[0]{\ifthenelse{\boolean{showcomments}}
{\textcolor{Purple}{(add cite(s))}}{}}
\newcommand{\abs}[1]{\ensuremath{\left\vert#1\right\vert}}
\newcommand{\diag}[1]{\ensuremath{\mathrm{diag}\left(#1\right)}}
\newcommand{\funof}[1]{\ensuremath{\left(#1\right)}}
\newcommand{\jw}{\ensuremath{\left(j\omega\right)}}
\newcommand{\R}{\ensuremath{\mathbb{R}}}
\newcommand{\s}{\ensuremath{\left(s\right)}}
\DeclareMathOperator{\tr}{tr}
\title{
Performance tradeoffs of dynamically controlled grid-connected  inverters in low inertia power systems
}
\author{Yan Jiang, Richard Pates, and Enrique Mallada% <-this % stops a space
\thanks{This work was supported by NSF CPS grant CNS 1544771, Johns Hopkins E$^2$SHI Seed Grant, and Johns Hopkins WSE startup funds.
}% <-this % stops a space
\thanks{Yan Jiang and Enrique Mallada are with the Department of Electrical and Computer Engineering, Johns Hopkins University, Baltimore, MD 21218 USA (e-mails: \{yjiang,mallada\}@jhu.edu).
}
\thanks{Richard Pates is with the Department of Automatic Control at Lund University, Lund, Sweden (e-mail: richard.pates@control.lth.se)
}
}
\begin{document}

\maketitle
\thispagestyle{empty}
\pagestyle{empty}

%%%%%%%%%%%%%%%%%%%%%%%%%%%%%%%%%%%%%%%%%%%%%%%%%%%%%%%%%%%%%%%%%%%%%%%%%%%%%%%%
\begin{abstract}
Implementing frequency response using grid-connected inverters is one of the popular proposed alternatives to mitigate the dynamic degradation experienced in low inertia power systems. However, such solution faces several challenges as inverters do not intrinsically possess the natural response to power fluctuations that synchronous generators have. Thus, to synthetically generate this response, inverters need to take frequency measurements, which are usually noisy, and subsequently make changes in the output power, which are therefore delayed. This paper explores the system-wide performance tradeoffs that arise when measurement noise, power disturbances, and delayed actions are considered in the design of dynamic controllers for grid-connected inverters. 
Using a recently proposed dynamic droop (iDroop) control for grid-connected inverters, which is inspired by classical first order lead-lag compensation, we show that the sets of parameters that result in highest noise attenuation, power disturbance mitigation, and delay robustness do not necessarily have a common intersection. In particular, lead compensation is desired in systems where power disturbances are the predominant source of degradation, while lag compensation is a better alternative when the system is dominated by delays or frequency noise. Our analysis further shows that iDroop can outperform the standard droop alternative in both joint noise and disturbance mitigation, and delay robustness.
\end{abstract}

%%%%%%%%%%%%%%%%%%%%%%%%%%%%%%%%%%%%%%%%%%%%%%%%%%%%%%%%%%%%%%%%%%%%%%%%%%%%%%%%
%INTRODUCTION
%%%%%%%%%%%%%%%%%%%%%%%%%%%%%%%%%%%%%%%%%%%%%%%%%%%%%%%%%%%%%%%%%%%%%%%%%%%%%%%%
%!TEX root = main.tex
\section{Introduction}\label{sec:intro}
%The power grid is transforming.
The composition of the electric gird is in state of flux~\cite{Milligan:2015ju}. 
Motivated by the need of reducing carbon emissions, conventional synchronous generators, with relatively large inertia, are being replaced with renewable energy sources with little (wind) or no inertia (solar) at all~\cite{Winter:2015dy}. 
Alongside, neither the remaining generators, nor the demand, are compensating this loss. On the synchronous generator side, there are no incentives to provide additional frequency response beyond their natural one~\cite{EasternInterconnect:2013wq}. On the demand side, the steady increase of power electronics is gradually diminishing the load sensitivity to frequency variations~\cite{WoodWollenberg1996}.
As a result, rapid frequency fluctuations are becoming a major source of concern for several grid operators~\cite{Boemer:2010wa,Kirby:2005uy}. 
Besides increasing the risk of frequency instabilities, this dynamic degradation also places limits on the total amount of renewable generation that can be sustained by the grid. Ireland, for instance, is already resourcing to wind curtailment --whenever wind becomes larger than $50\%$ of existing demand-- in order to preserve the grid stability.

% Existing solutions aim to mimic old ones for no apparent reason.
Among the several efforts under way to mitigate this dynamic degradation, one prominent alternative is to implement frequency response using (electronically coupled) inverter-based generation~\cite{Spolar:2012vj}.
For example, in the US, the Federal Energy Regulatory Commission (FERC) has recently issued a notice of intent of rule making~\cite{FederalEnergyRegulatoryCommission:2016wz} that mandates frequency response by renewable generation.
The goal is to use inverter-based generators to mimic synchronous generator behavior, or in other words, to implement virtual inertia~\cite{Driesen:ft}.
However, while implementing virtual inertia can mitigate this degradation, it is unclear whether that particular choice of control is the most suitable for it.
On the one hand, unlike generator dynamics that set the grid frequency, virtual inertia controllers estimate the grid frequency and its derivative using noisy and delayed measurements.
On the other hand, inverter-based control can be significantly faster than conventional generators. Thus using inverters to mimic generators does not take advantage of their full potential.

Recently, a novel dynamic droop control (iDroop)~\cite{m2016cdc} has been proposed as an alternative to virtual inertia. iDroop uses first order lead-lag compensation --inspired by scalable control laws in data networks~\cite{Paganini:gz}--  and seeks to exploit the added flexibility present in inverters. 
Unlike virtual inertia that is sensitive to noisy measurements (it has unbounded $H_2$ norm~\cite{m2016cdc}), iDroop  experimentally improves the dynamic performance without the undesired unbounded noise amplification. 
In this paper we provide a theoretical foundation to such experimental findings. 
More precisely, for networks with homogeneous parameters, we analytically compute the dynamic performance ($H_2$ norm) of the control law proposed in~\cite{m2016cdc} in the presence of both frequency measurements and power disturbances (Theorem \ref{th:h2-idroop}). We show that iDroop not only is able to mitigate the noise amplification that virtual inertia introduces, but it can also outperform the standard droop control (Theorem \ref{th:h2-improves}). Furthermore, %using a novel decentralized stability analysis for power systems~\cite{pm2017ifac-wc} 
we analyze the robust stability of iDroop in the presence of delay and show that it can also outperform droop control (Theorem \ref{thm:delay}).

The analysis also unveils several intrinsic performance tradeoffs between power disturbances, measurement noise and delays, and how the lead-lag structure of iDroop is instrumental on the performance improvements. In particular, when the system is dominated by power disturbances, lead compensation provides the best performance. However, when the system is dominated by frequency noise, lag compensation is desired. Interestingly, the latter (lag compensation) matches the requirements for improving delay robustness with iDroop. However, achieving joint disturbance attenuation and delay robustness can be more challenging.

\section{Preliminaries}\label{sec:prelim}

% In this section we introduce the power network model used in our analysis as well as several tools used in our analysis.

\subsection{Power Network Model}\label{ssec:model}

We use the power network model used in \cite{pm2017ifac-wc}. We consider a network of $n$ buses denoted by $i\in V := \{1,\dots,n\}$. The power system dynamics is modeled as the feedback interconnection of bus dynamics $P := \diag{p_i, i \in V}$ and network dynamics $N$ as shown in Figure \ref{fig:1}.

\begin{figure}
\centering
\includegraphics[height=6cm,width=.85\columnwidth]{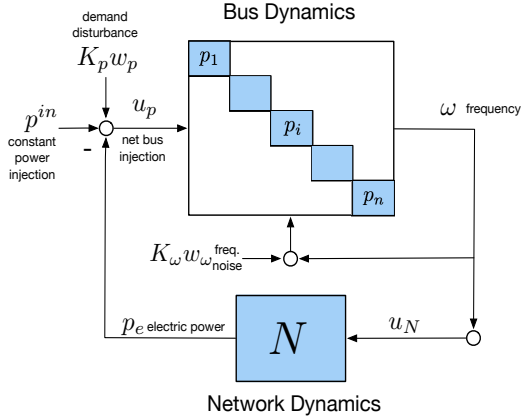}
\caption{Power Network Model}\label{fig:1}
\end{figure}

\subsubsection*{Bus Dynamics}
Each subsystem $p_i$ describes the $i$th bus dynamics where the input is the net bus power injection imbalance $u_{p,i}$ and the output is the frequency deviation from the nominal value $\omega_i$.
The bus dynamics of this paper are described in Figure \ref{fig:2} where $g_i$ represents the generator dynamics and $c_i$ the inverters dynamics. We assume that both dynamics are linear time invariant and thus 
\begin{equation}\label{eq:pi-1}
\hat \omega_i(s) = p_i(s) \hat u_{p,i}(s),%\;\text{ and }\; p_i(s)= \frac{g_i(s)}{1+c_i(s)g_i(s)},
\end{equation}
where $\hat \omega_i(s)$ and $\hat u_{p,i}(s)$ are the Laplace transform of $\omega_i(t)$ and $u_{p,i}(t)$, respectively.

The generator dynamics map the power injection $x_{i}(t)+u_{p,i}(t)$ to the bus frequency $\omega_i(t)$, and are represented in Laplace domain by
\begin{equation}\label{eq:gi}
\hat \omega_i(s) = g_i(s)\left(\hat x_i(s) + \hat u_{p,i}(s)\right).
\end{equation}
We use the  swing dynamics to model $g_i$, i.e., 
\begin{equation}\label{eq:swing}
	g_i(s)  = \frac{1}{M_is + D_i}
\end{equation}
where $M_i$ is the aggregate generator's inertia and $D_i$ is the aggregate generator's droop and frequency dependent load coefficient.

The inverter dynamics are modeled as
\begin{equation}\label{eq:ci}%\label{eq:02}
	\hat{x}_i(s) = - c_i(s)\hat{\omega}_i(s) 
\end{equation}
where $\hat{x}_i(s)$ denotes the Laplace transform of $x_i(t)$ (the power injected by the inverter), and $c_i(s)$ represents the control law. 
Equation \eqref{eq:ci} assumes that inverters operate in frequency synchronized (grid-connected) mode~\cite{Ciobotaru:2008kp,Blaabjerg:2006cea} where each inverter measures the local grid frequency $\omega_i(t)$ and statically sets the voltage phase of the inverter so that the output power is $x_i(t)$. This is a reasonable assumption as generator dynamics are significantly slower than inverters. 

% \begin{subequations} \label{eq:01}
%     \begin{equation}
%         M_i \dot{\omega_i} = - (R_{g,i}^{-1} + D_i)\omega_i + x_i + u_{p,i}
%     \end{equation}
%     \begin{equation}
%         y_{p,i} = \omega_i 
%     \end{equation} 
% \end{subequations} 
% where $\omega_i$ denotes the frequency deviation from nominal, $x_i$ the power injected by inverter-based generation at bus $i$, $u_{p,i}$ the exogenous real power injection to bus $i$, $y_{p,i}$ the output of subsystem $p_i$, $M_i$ the aggregate bus inertia, $D_i$ the aggregate damping  or frequency dependent load coefficient, and $R_{g,i}$ the droop coefficient.

We can use the control law $c_i(s)$ to model different algorithms that can be used to map $\omega_i(t)$ to $x_i(t)$. For example, 
$c_i(s) = R_{r,i}^{-1}$ and $c_i(s) = (\nu_is + R_{r,i}^{-1})$,
% \begin{equation}\label{eq:dc+vi}
% c_i(s) = -R_{r,i}^{-1}\quad\text{ or }\quad c_i(s) = -(\nu_is + R_{r,i}^{-1}),
% \end{equation}
represent the standard droop and virtual inertia controllers, respectively. Similarly, the iDroop controller defined in \cite{m2016cdc} is given by
\begin{equation}
        c_i(s) = \dfrac{\nu_i s + \delta_i R_{r,i}^{-1}}{s + \delta_i} \; \label{eq:05}
\end{equation} 
where $R_{r,i}^{-1}$ is the droop constant, and $\nu_i$ and $\delta_i$ are tunable parameters.

Combining (\ref{eq:gi}) and (\ref{eq:ci}) gives the input-output representation of the bus dynamics as
\begin{equation}
        p_i(s) = \dfrac{1}{M_i s + D_i}\left(1 + \dfrac{c_i(s)}{M_i s + D_i}\right)^{-1} \; .
\end{equation} 

\begin{figure}
\centering
\includegraphics[width=.7\columnwidth]{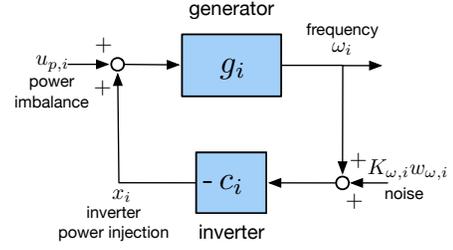}
\caption{Bus Dynamics $p_i$}\label{fig:2}
\end{figure}

\subsubsection*{Network Dynamics}
The network dynamics are given by
\begin{equation}\label{eq:N}
\hat p_e(s) = N(s) \hat u_N(s),\;\text{ where }\; N(s) = \frac{1}{s}L_B,
\end{equation}
$L_B \in \mathbb{R}^{n \times n}$ is the $B_{ij}$-weighted Laplacian matrix that describes how the transmission network couples the dynamics of different buses, i.e. $(L_B)_{ij}$ is equal to $-B_{ij}$, if $ij\in E$, $\sum_{e\in E} B_{e}$, if $i\!=\!j$, and $0$ otherwise, 
where $E$ is the set of buses.

% % \begin{subequations} 
%     \begin{equation}\label{eq:07}
%        \dot{\theta} =  u_N, \quad y_N = L_B \theta 
%     \end{equation}
% % \end{subequations} 
% where $\theta = (\theta_i, i \in V)$ denotes the vector of system phase, i.e. $\theta_i$ is the phase of bus i $u_N := \omega = (\omega_i, i \in V)$ the vector of system frequency deviation from nominal, $y_N = (y_{N,i}, i \in V)$ the vector of electric power network demand, i.e. $y_{N,i}$ is the total electric power drained by the network at bus i, and 

\subsubsection*{State-Space Model}
It will be useful to use the state space representation of the closed loop system of Figure \ref{fig:1}. Using time domain versions of \eqref{eq:gi}, \eqref{eq:swing}, \eqref{eq:ci}, and \eqref{eq:N}, the power system dynamics can be represented by 
\begin{subequations}\label{eq:state}
\begin{align}
\dot{\theta} 	&= \omega\\
M\dot \omega 	&= -D\omega - L_B\theta + x +p^{in}\\
\dot{x} 		&= -K_\delta(R_r^{-1}\omega + x) -K_\nu\dot \omega \label{eq:idroop}
\end{align}
\end{subequations}
where $M := \diag{M_i, i \in V }$, $R_r := \diag{R_{r,i}, i \in V}$,  $D := \diag{D_i, i \in V }$, $K_\delta:= \diag{\delta_i,i\in V}$, $K_\nu:= \diag{\nu_i,i\in V}$, $\theta := (\theta_i, i \in V)$, $\omega := (\omega_i, i \in V)$, $x := (x_i, i \in V)$, and $p^{in}:=(p^{in}_i,i\in V)$; we refer the reader to  \cite{pm2017ifac-wc} for more details connecting the two models.

Equation \eqref{eq:state} illustrates one of the potential challenges of implementing iDroop (or similarly virtual inertia), which comes from the need to measure $\dot \omega$ to implement \eqref{eq:idroop}. Fortunately, in the case of iDroop this can be avoided by considering the change of variable 
\begin{equation} 
    x = z - K_{\nu}\omega. \label{eq:2}
\end{equation}
Thus combining \eqref{eq:state} with \eqref{eq:2} gives
\begin{subequations} \label{eq:state-2}
    \begin{align}
        \dot{\theta} &= \omega\,, \label{eq:4-a}\\
        M \dot{\omega} &= -  D\omega-L_B\theta + (z - K_{\nu}\omega) +p^{in}\,,\label{eq:4-b}\\
        \dot{z} &= K_{\delta}(K_{\nu} - R_r^{-1})\omega - K_{\delta} z\,. \label{eq:4-c}
    \end{align}
\end{subequations}
To simplify our analysis we translate the equilibrium point of \eqref{eq:state-2} to the origin. Since \eqref{eq:state-2} is a linear time invariant system, this is equivalent to setting $p^{in}$ to zero. Thus in the rest of this paper, we assume without loss of generality that $p^{in}=0$.

\subsection{Performance Measures}\label{ssec:perfomrance-measures}
As mentioned before the aim of this paper is to evaluate how power disturbances, frequency measurement noise and delay affect overall dynamic behavior,  and in particular the frequency fluctuations, of power network described in Figure \ref{fig:1}.
To this end we define specific performance metrics that will enable us to show that grid-connected inverters implementing iDroop can improve the performance beyond existing solutions.

\subsubsection*{Measurements Noise and Power Disturbances}

We assume that the power system is being perturbed by two different signals (see figures \ref{fig:1} and \ref{fig:2}): (i) $K_pw_p\in\R^n$, with $K_p=\diag{k_{p,i},i\in V}$, that represents the fluctuations of the power injections at each bus; and (ii) $K_{\omega} w_\omega\in\R^n$, where $K_{\omega}=\diag{k_{\omega,i},i\in V}$, that represents the frequency measurements noise that inverters experience at each bus. The signals $w_p$ and $w_\omega$ represent uncorrelated stochastic white noise with zero mean and unit variance, i.e., $E[w_p(t)^Tw_p(\tau)]=\delta(t-\tau)I_n$ and $E[w_\omega(t)^Tw_\omega(\tau)]=\delta(t-\tau)I_n$.

%%%
Substituting $\omega$ with $\omega + K_\omega w_\omega$ on the RHS of \eqref{eq:4-b} and \eqref{eq:4-c}, $p^{in}=0$ with $K_pw_p$ in \eqref{eq:4-b}, and defining $y=\omega$ as the output of \eqref{eq:state-2} gives
%\begin{subequations} 
    \begin{equation}\label{eq:mimo}
       \begin{bmatrix} \dot{\theta} \\ \dot{\omega} \\ \dot{z} \end{bmatrix}
       =
       A \begin{bmatrix} \theta \\ \omega \\ z \end{bmatrix} + B \begin{bmatrix} w_p \\ w_{\omega} \end{bmatrix} \; ,\qquad
        y = C \begin{bmatrix} \theta \\ \omega \\ z \end{bmatrix}  ,
     \end{equation}      
%\end{subequations}
where
\begin{subequations} \label{eq:9}
    \begin{align}
       A &\!=\! \begin{bmatrix} 0_{n\times n} & I_n & 0_{n\times n} \\ -M^{-1}L_B & -M^{-1}(D + K_{\nu}) & M^{-1} \\ 0_{n\times n} & K_{\delta} (K_{\nu} - R_r^{-1}) & - K_{\delta} \end{bmatrix},\\
       B &\!=\! \begin{bmatrix} 0_{n\times n} \!&\! 0_{n\times n} \\ M^{-1}K_p \!&\! - M^{-1} K_{\nu} K_{\omega} \\ 0_{n\times n} \!&\! K_\delta(K_{\nu} - R_r^{-1})K_{\omega} \end{bmatrix}\!,\;
       C \!=\! \begin{bmatrix} 0_{n\times n} \!&\! I_n \!&\! 0_{n\times n} \end{bmatrix}\!.
    \end{align}
\end{subequations}

Thus if we let $G_\text{iDroop}$ denote the LTI system \eqref{eq:mimo}, we measure the effect of power disturbances and frequency measurements noise using the $H_2$ norm of the $G_\text{iDroop}$ which is given by 
\begin{flalign}\label{eq:dyn-cost}
\|G_\text{iDroop}\|_{H_2}^2 &= \lim_{t\rightarrow\infty} E [y^T(t)y(t)].
 \end{flalign}

The computation of the $H_2$ norm has been widely studied in modern control theory. In particular, one very useful procedure to compute $\|G_\text{iDroop}\|_{H_2}$ (see ~\cite{Doyle:1989kf}) is based on using 
\begin{equation}\label{eq:H2}
\|G_\text{iDroop}\|_{H_2}^2=\tr (B^TXB)
\end{equation}
where $X$ is the observability Grammian, i.e. $X$ solves the Lyapunov equation 
\begin{equation}\label{eq:lyapunov}
A^TX+XA=-C^TC.
\end{equation}

In the context of power systems this methodology has been first used in \cite{Tegling:2015ef}, where the authors seek to compute the power losses incurred by the network in the process of resynchronizing   generators after a disturbance.  Since then, several works have used similar metrics to evaluate effect of controllers on the power system performance, see e.g. \cite{Poolla:2015vq,Tegling:2016wna}.%\addcites.

\subsubsection*{Delay Robustness}

The frequency measurements available to the inverters will be subject to delay. When conducting inverter design, especially when considering inverters with fast dynamics and large gains, it is important to directly consider the effect of this delay to maintain the desired performance and prevent instability. 

Delays to the frequency measurements can be easily accommodated into the models discussed in Section~\ref{ssec:model} by multiplying the inverter transfer fuction models $c_i\s$ by $e^{-s\tau_i}$. For example, the iDroop contoller in \eqref{eq:05} now becomes
\[
        c_i(s) = \dfrac{\nu_i s + \delta_i R_{r,i}^{-1}}{s + \delta_i}e^{-s\tau_i}. 
\]
Here the constant $\tau_i>0$ corresponds to the delay to \emph{i}th inverters frequency measurment. We then define the delay robustness $\tau_\text{rob}$ of power network model to be the largest $\tau>0$ such that if
\[
\tau_i<\tau, \;\forall{}i\in{}V,
\]
then the model remains stable. The delay robustness therefore quanitifies how much delay to the frequency measurements can be tolerated before stability of the network is lost. This measure is in essence an adaptation of the phase margin to the our power system network model, and provides a classical measure of robustness to assess the designed inverters.

%%%%%%%%%%%%%%%%%%%%%%%%%%%%%%%%%%%%%%%%%%%%%%%%%%%%%%%%%%%%%%%%%%%%%%%%%%%%%%%%

%%%%%%%%%%%%%%%%%%%%%%%%%%%%%%%%%%%%%%%%%%%%%%%%%%%%%%%%%%%%%%%%%%%%%%%%%%%%%%%%
% Performance
%%%%%%%%%%%%%%%%%%%%%%%%%%%%%%%%%%%%%%%%%%%%%%%%%%%%%%%%%%%%%%%%%%%%%%%%%%%%%%%%
%!TEX root = main.tex
\section{Effect of Measurements Noise and Power Disturbances}

In this section we evaluate how the relative intensity of the two type of stochastic disturbances (frequency measurement noise and power disturbances) affect the overall system performance using the $H_2$ norm metric $||G_\text{iDroop}||_{H_2}$ described in Section \ref{ssec:perfomrance-measures}.
To make the analysis tractable, we assume homogeneous parameters $M_i=m$, $D_i=d$, $R_{r,i}=r_r$, $K_{p,i}=k_p$, $K_{\omega,i}=k_\omega$, $\delta_i=\delta$, and $\nu_i=\nu$, $\forall i\in V$.

In \cite{m2016cdc} it was shown that, if instead of iDroop, the inverters implement either Droop Control (DC) or virtual inertial (VI) then the $H_2$ norm is respectively given by 
\begin{equation}\label{eq:h2-dc}
\|G_\text{DC}\|_{H_2}^2 = \frac{n[k_p^2+(r_r^{-1}k_\omega)^2]}{2m(d+r_r^{-1})}\;\, \text{and}\;\,\|G_\text{VI}\|_{H_2}^2=+\infty,
\end{equation}
where $G_\text{DC}$ and $G_\text{VI}$ represent the system \eqref{eq:mimo} when the inverters implement droop control or virtual inertia respectively. In this section we compare \eqref{eq:h2-dc} with the corresponding formulae for iDroop that is computed in Section \ref{ssec:h2-idroop} and show in Section \ref{ssec:performace-improvement} that iDroop can use its additional flexility to  outperform DC and VI. 

\subsection{$H_2$-norm Performance of iDroop}\label{ssec:h2-idroop}

We now compute the $H_2$ norm of iDroop. Since we assume homogeneous parameters we can decouple \eqref{eq:mimo} into $n$ subsystems. More precisely, let $U$ be the orthonormal transformation that diagonalizes $L_B$. That is, $U^TU=UU^T=I_n$ such that $L_B=U\Gamma U^T$ where $\Gamma = \diag{\lambda_1, \ldots, \lambda_n}$ with $\lambda_i$ being the $i$th eigenvalue of $L_B$ ordered in increasing order ($\lambda_1=0\leq \lambda_2\leq\dots\leq\lambda_n$).
Then, making the following change of variables $\theta = U \theta'$, $\omega = U \omega'$, $z = U z'$, $y = U y'$, $w_p = U w_p'$, $w_{\omega} = U w_{\omega}'$
%, where $U$ is the orthonormal transformation that diagonalizes $L_B$, i.e. $U^{T} L_B U = \Gamma$ such that $\Gamma = \mathrm{diag} \{ \lambda_1, \ldots, \lambda_n \}$. Here, $U$ must exist since $L_B$ is symmetric and any symmetric matrix is diagonalizable via orthogonal matrix. Then, using the property of orthogonal matrix that $U^{-1} = U^{T}$ 
leads to
\begin{subequations} \label{eq:14}
    \begin{align}
       \begin{bmatrix} \dot{\theta}' \\ \dot{\omega}' \\ \dot{z}' \end{bmatrix} 
       &=\begin{bmatrix} 
       0_{n\times n} & I_n & 0_{n\times n} \\ -\frac{1}{m}\Gamma & -\frac{d+\nu}{m}I_n & \frac{1}{m}I_n \\ 0_{n\times n} & {\delta} ({\nu} - r_r^{-1})I_n & - {\delta}I_n  \end{bmatrix} \begin{bmatrix} \theta' \\ \omega' \\ z' \end{bmatrix} \nonumber\\ 
       &+ \begin{bmatrix} 0_{n\times n} & 0_{n\times n} \\ \frac{k_p}{m}I_n & -\frac{\nu k_{\omega}}{m}I_n \\ 0_{n\times n} & {\delta}({\nu} - r_r^{-1})k_{\omega}I_n \end{bmatrix} \begin{bmatrix} w_p
       ' \\ w_{\omega}' \end{bmatrix} \; ,
    \end{align}
    \begin{equation}
       y' = \begin{bmatrix} 0_{n\times n} & I_n & 0_{n\times n} \end{bmatrix} \begin{bmatrix} \theta' \\ \omega' \\ z' \end{bmatrix} \; .
    \end{equation}
\end{subequations}

Since \eqref{eq:14} is composed by diagonal matrices,
 it is equivalent to $n$ decoupled subsystems of the form
\begin{align}\label{eq:15}
        \begin{bmatrix} \dot{\theta}_i' \\ \dot{\omega}_i' \\ \dot{z}_i' \end{bmatrix}
       = A_i \begin{bmatrix} \theta_i' \\ \omega_i' \\ z_i' \end{bmatrix} + B_i \begin{bmatrix} w_{p,i}' \\ w_{\omega,i}' \end{bmatrix} \; ,   \quad
         y_i' = C_i \begin{bmatrix} \theta_i' \\ \omega_i' \\ z_i' \end{bmatrix} \; ,
\end{align}
where
\begin{subequations} \label{eq:16}
    \begin{align}
       A_i &= \begin{bmatrix} 0 & 1 & 0 \\ -\lambda_i/m & -(d + \nu )/m & 1/m \\ 0 & \delta(\nu - r_r^{-1}) & -\delta \end{bmatrix} \; ,\label{eq:16a}\\
       B_i &= \begin{bmatrix} 0 & 0 \\ k_p/m & -\nu k_{\omega}/m \\ 0 & \delta (\nu - r_r^{-1})k_{\omega} \end{bmatrix} , \;
       C_i = \begin{bmatrix} 0 & 1 & 0 \end{bmatrix} \; .\label{eq:16b}
    \end{align}
\end{subequations}

Therefore, since $U$ is an orthonormal transformation, it does not affect the $H_2$ norm of \eqref{eq:mimo}, and it will allow us to compute the $H_2$ norm of \eqref{eq:mimo} by making $n$ simpler $H_2$ norm computations using  \eqref{eq:15}.

\begin{theorem}[$H_2$ norm of iDroop]\label{th:h2-idroop}
The $H_2$ norm of the system \eqref{eq:mimo}, i.e. $||G_\text{iDroop}||_{H_2}$ is given by 
\begin{align}
    & \|G_\text{iDroop}\|^2_{H_2}= \dfrac{n(k_p^2+\nu^2 k_\omega^2)}{2m(d + \nu)}\nonumber\\
    & + \sum_{i=1}^n{\dfrac{\delta^2(\nu - r_r^{-1})\left[\dfrac{k_p^2+\nu^2 k_{\omega}^2}{d+\nu} - (\nu+ r_r^{-1})k_\omega^2\right] }{2\left[(d+\nu + m\delta)\delta(d + r_r^{-1}) + (d+\nu)\lambda_i\right]}}\; . \label{eq:h2-idroop}
\end{align}\vspace{1ex}
\end{theorem}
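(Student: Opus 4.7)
The plan is to exploit the modal decomposition already established in \eqref{eq:14}--\eqref{eq:16}. Because the orthogonal change of coordinates $U$ preserves the $H_2$ norm and the closed loop decouples into $n$ independent LTI subsystems $(A_i,B_i,C_i)$ of the same structural form, one has
\[
\|G_\text{iDroop}\|^2_{H_2}=\sum_{i=1}^n\|G_i\|^2_{H_2},
\]
and it suffices to compute a closed form for $\|G_i\|^2_{H_2}$ as a function of the network mode $\lambda_i$ and then sum.

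For each mode I would apply \eqref{eq:H2} together with the Lyapunov equation \eqref{eq:lyapunov} for $(A_i,B_i,C_i)$ in \eqref{eq:16}. Two structural observations make this tractable. First, since $C_i^{T}C_i=\diag{0,1,0}$ is rank one, the six symmetric scalar equations obtained from \eqref{eq:lyapunov} decouple in a helpful order: the $(1,1)$ entry forces $x_{12}=0$ (for $\lambda_i\neq 0$), the $(3,3)$ entry yields $x_{33}=x_{23}/(m\delta)$, and the $(1,3)$ entry yields $x_{13}=-\lambda_i x_{23}/(m\delta)$. Second, since the first row of $B_i$ vanishes, $\tr(B_i^{T}X_iB_i)$ depends only on $x_{22}$, $x_{23}$ and $x_{33}$, so the remaining diagonal Grammian entries $x_{11}$ and the irrelevant block are never needed. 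The $(2,2)$ and $(2,3)$ equations then reduce to a $2\times2$ linear system in $x_{22}$ and $x_{23}$ whose solution is a rational function of $\lambda_i$ with denominator $(d+\nu+m\delta)\delta(d+r_r^{-1})+(d+\nu)\lambda_i$, which is precisely the denominator appearing in \eqref{eq:h2-idroop}.

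Substituting the solved Grammian entries back into $\tr(B_i^{T}X_iB_i)$ and regrouping naturally splits the result into a $\lambda_i$-independent piece $(k_p^2+\nu^2 k_\omega^2)/[2m(d+\nu)]$ and a $\lambda_i$-dependent correction whose numerator factors as $\delta^2(\nu-r_r^{-1})\bigl[(k_p^2+\nu^2 k_\omega^2)/(d+\nu)-(\nu+r_r^{-1})k_\omega^2\bigr]$; summing over $i$ gives \eqref{eq:h2-idroop}. A useful consistency check is that setting $\nu=r_r^{-1}$ collapses iDroop to pure droop and the correction vanishes, reproducing the first formula in \eqref{eq:h2-dc}.

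The main subtlety is the zero mode $\lambda_1=0$, for which $A_1$ has an eigenvalue at the origin corresponding to the unobservable integrator $\theta_1'$; strictly speaking $A_1$ is only marginally stable and \eqref{eq:lyapunov} does not have a unique solution. I would handle this either by restricting to the observable and Hurwitz subsystem in $(\omega_1',z_1')$ and checking that its $2\times 2$ Lyapunov equation produces the same $x_{22}$, $x_{23}$, $x_{33}$ as the general formula evaluated at $\lambda_1=0$, or equivalently by noting that $x_{12}$ is the only Grammian entry left undetermined at $\lambda_1=0$ and that it does not appear in $\tr(B_1^{T}X_1B_1)$. The principal obstacle is purely the algebraic bookkeeping: manipulating the six Lyapunov equations so that the compact combinations $\delta(d+r_r^{-1})$ and $d+\nu+m\delta$ emerge, and recognizing the factorization of the numerator that makes the dependence on $\nu$, $r_r^{-1}$ and $k_\omega^2$ collapse into the bracket $(k_p^2+\nu^2 k_\omega^2)/(d+\nu)-(\nu+r_r^{-1})k_\omega^2$.
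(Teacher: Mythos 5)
Your proposal is correct and follows essentially the same route as the paper: the modal decoupling via the orthonormal diagonalization of $L_B$, the entry-by-entry solution of the $3\times 3$ Lyapunov equation exploiting $q_{12}=0$ and the vanishing first row of $B_i$, and the separate treatment of the $\lambda_1=0$ mode by dropping the unobservable integrator all match the paper's argument. The only cosmetic difference is that you phrase the final step as a $2\times 2$ linear system in $(q_{22},q_{23})$ while the paper solves it by successive substitution down to $q_{33}$; these are the same computation.
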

In some cases, it will be convenient to make explicit the dependence of the $H_2$ norm of iDroop on the parameters $\nu$ and $\delta$. Thus we will also use $\|G_\text{iDroop}\|^2_{H_2}(\nu,\delta)$ to refer to \eqref{eq:h2-idroop}.

\begin{proof}
As mentioned before, we compute $||G_\text{iDroop}||_{H_2}$ by computing instead the norm of $n$ orthogonal subsystems $G_{\text{iDroop},i}$ described by \eqref{eq:15}. 
Therefore we need to solve the Lyapunov equation
\begin{equation}
   A^T_i Q + Q A_i = - C^T_i C_i \; , \label{eq:17}
\end{equation}
where $A_i$ is given by \eqref{eq:16a} and 
 $Q$ must be symmetric, i.e.,
\begin{equation}
   Q = \begin{bmatrix} q_{11} & q_{12} & q_{13} \\ q_{12} & q_{22} & q_{23} \\ q_{13} & q_{23} & q_{33} \end{bmatrix} \; . \label{eq:18}
\end{equation}
Whenever $\lambda_i\not=0$ \eqref{eq:17} has a unique solution for $Q$. However, when $\lambda_i=0$ the system \eqref{eq:15} has a zero pole which could render infinite $H_2$ norm. However, as we will later see this mode is unobservable. 

We focus first on the case $\lambda_i\not=0$. Thus we have:
\begin{subequations} \label{eq:19}
\begin{align}
   & QA_i =  \begin{bmatrix} q_{11} & q_{12} & q_{13} \\ q_{12} & q_{22} & q_{23} \\ q_{13} & q_{23} & q_{33} \end{bmatrix} \begin{bmatrix} 0 & 1 & 0 \\ -\lambda_i/m & -(d+\nu)/m & 1/m \\ 0 & \delta(\nu - r_{r}^{-1}) & -\delta \end{bmatrix} \nonumber\\
 & \!=\!  \begin{bmatrix} \!-\!\dfrac{\lambda_i}{m}q_{12} & q_{11} \!-\!a_1q_{12} \!+\! \delta a_2q_{13} & \dfrac{1}{m}q_{12} \!-\!\delta q_{13} \\ \!-\!\dfrac{\lambda_i}{m}q_{22} & q_{12} \!-\!a_1q_{22} \!+\! \delta a_2q_{23} & \dfrac{1}{m}q_{22} \!-\!\delta q_{23} \\ \!-\!\dfrac{\lambda_i}{m}q_{23} & q_{13} \!-\!a_1q_{23} + \delta a_2q_{33} & \dfrac{1}{m}q_{23} \!-\!\delta q_{33} \end{bmatrix}\!\!,\label{eq:19a}
\end{align}
where $a_1=\frac{d+\nu}{m}$ and $a_2=\nu - r_r^{-1}$. Similarly, 
\begin{equation}\label{eq:19b}
A^T_i Q =A^T_i Q^T = (QA_i)^T
\end{equation}
which is the transpose of \eqref{eq:19a}.
% \begin{align}
%     A^T_i Q = & A^T_i Q^T = (QA_i)^T \nonumber\\
%     = & \begin{bmatrix} -\dfrac{\lambda_i}{m}q_{12} & q_{11} -\dfrac{\nu + d}{m}q_{12} + \delta(\nu - r_r^{-1})q_{13} & \dfrac{1}{m}q_{12} -\delta q_{13} \\ -\dfrac{\lambda_i}{m}q_{22} & q_{12} -\dfrac{\nu + d}{m}q_{22} + \delta(\nu - r_r^{-1})q_{23} & \dfrac{1}{m}q_{22} -\delta q_{23} \\ -\dfrac{\lambda_i}{m}q_{23} & q_{13} -\dfrac{\nu + d}{m}q_{23} + \delta(\nu - r_r^{-1})q_{33} & \dfrac{1}{m}q_{23} -\delta q_{33} \end{bmatrix}^T \; ,
% \end{align}
Finally,
\begin{equation}
   - C^T_i C_i = - \begin{bmatrix} 0 \\ 1 \\ 0 \end{bmatrix} \ \begin{bmatrix} 0 & 1 & 0 \end{bmatrix} = \begin{bmatrix} 0 & 0 & 0 \\ 0 & -1 & 0 \\ 0 & 0 & 0 \end{bmatrix} \; .
\end{equation}
\end{subequations}

Substituting (\ref{eq:19}) into (\ref{eq:17})  gives 
\begin{subequations} \label{eq:20}
    \begin{equation}
    -\dfrac{2\lambda_i}{m}q_{12} = 0 \label{eq:20a} \; ,
    \end{equation}
    \begin{equation}
    q_{11} -a_1q_{12} + \delta a_2q_{13} -\dfrac{\lambda_i}{m}q_{22} = 0 \; ,
    \end{equation}
    \begin{equation}
    \dfrac{1}{m}q_{12} -\delta q_{13} -\dfrac{\lambda_i}{m}q_{23} = 0 \; ,
    \end{equation}
    \begin{equation}
    2\left(q_{12} -a_1q_{22} + \delta a_2q_{23}\right) = -1 \; ,
    \end{equation}
    \begin{equation}
    \dfrac{1}{m}q_{22} -\delta q_{23} + q_{13} -a_1q_{23} + \delta a_2q_{33} = 0 \; ,
    \end{equation}
    \begin{equation}
    2(\dfrac{1}{m}q_{23} -\delta q_{33}) = 0 \; .
    \end{equation}
\end{subequations}

Since $\lambda_i\ne0$,  from (\ref{eq:20a}) we get
\begin{equation}
    q_{12} = 0 \; . \label{eq:21}
\end{equation}

Substituting (\ref{eq:21}) into (\ref{eq:20}) gives simplified equations
\begin{subequations} \label{eq:22}
    \begin{equation}
    q_{11} + \delta a_2q_{13} -\dfrac{\lambda_i}{m}q_{22} = 0 \; , \label{eq:22a}
    \end{equation}
    \begin{equation}
    q_{13} = -\dfrac{\lambda_i}{m\delta}q_{23} \; , \label{eq:22b}
    \end{equation}
    \begin{equation}
    2\left(-a_1q_{22} + \delta a_2q_{23}\right) = -1 \; , \label{eq:22c}
    \end{equation}
    \begin{equation}
    \dfrac{1}{m}q_{22} + q_{13} -(a_1 + \delta)q_{23} + \delta a_2q_{33} = 0 \; , \label{eq:22d}
    \end{equation}
    \begin{equation}
    q_{23} = m\delta q_{33} \; . \label{eq:22e}
    \end{equation}
\end{subequations}

Substituting (\ref{eq:22e}) into (\ref{eq:22b}) and (\ref{eq:22c}), respectively, and using again $a_1=\frac{d+\nu}{m}$ and $a_2=\nu - r_r^{-1}$ leads to
\begin{equation}
    q_{13} = -\lambda_i q_{33} \; , \label{eq:23}
\end{equation}
\begin{equation}
    q_{22} = \dfrac{m}{d+\nu} \left[m\delta^2(\nu - r_r^{-1}) q_{33} + \dfrac{1}{2}\right] \; . \label{eq:24}
\end{equation}

Substituting (\ref{eq:23}) and (\ref{eq:24}) into  (\ref{eq:22a}) gives
\begin{equation}
    q_{11} = \dfrac{\lambda_i}{d+\nu} \left[m\delta^2(\nu - r_r^{-1}) q_{33} + \dfrac{1}{2}\right] + \delta(\nu - r_r^{-1})\lambda_i q_{33} \; .
\end{equation}

Substituting (\ref{eq:22e}), (\ref{eq:23}), and (\ref{eq:24}) into  (\ref{eq:22d}), respectively, we can get
\begin{align}
  0&= \dfrac{1}{d+\nu} \left[m\delta^2(\nu - r_r^{-1}) q_{33} +\dfrac{1}{2}\right] -\lambda_i q_{33} \\
  &-\left(\dfrac{d+\nu}{m} + \delta\right)m\delta q_{33} + \delta(\nu - r_r^{-1})q_{33} \; ,
\end{align}
from which $q_{33}$ can be solved as
\begin{align}
   q_{33} = \dfrac{1}{2\left[(d+\nu + m\delta)\delta(d + r_r^{-1}) + (d+\nu)\lambda_i\right]} \; . \label{eq:27}
\end{align}

Now, we can compute $\|G_{\text{iDroop},i}\|^2_{H_2}$ based on 
\begin{equation}
    \|G_{\text{iDroop},i}\|^2_{H_2} = \mathrm{tr}(B_i^T Q B_i) \; . \label{eq:28}
\end{equation}

Substituting (\ref{eq:16b}) and (\ref{eq:18}) into  (\ref{eq:28}), we can get
\begin{align}
    & \|G_{\text{iDroop,i}}\|^2_{H_2}= \dfrac{k_p^2+\nu^2 k_{\omega}^2}{m^2} q_{22} \nonumber\\
    &  - \dfrac{2\nu k_{\omega}^2 }{m}\delta (\nu - r_r^{-1}) q_{23} + \delta^2 (\nu - r_r^{-1})^2k_{\omega}^2 q_{33} \; . \label{eq:29}
\end{align}

Substituting (\ref{eq:22e}) and (\ref{eq:24}) into (\ref{eq:29}) gives
\begin{align}
    & \|G_{\text{iDroop},i}\|^2_{H_2} = \dfrac{k_p^2+\nu^2 k_{\omega}^2}{2m(d+\nu)} \nonumber\\
    & + \delta^2(\nu - r_r^{-1})\left[\dfrac{k_p^2+\nu^2 k_{\omega}^2}{d+\nu} - (\nu + r_r^{-1})k_{\omega}^2\right] q_{33}  \; . \label{eq:30}
\end{align}

Substituting (\ref{eq:27}) into (\ref{eq:30}) gives
\begin{align}
    & \|G_{\text{iDroop},i}\|^2_{H_2}= \dfrac{k_p^2+\nu^2 k_\omega^2}{2m(d+\nu)}\nonumber\\
    & + \dfrac{\delta^2(\nu - r_r^{-1})\left[\dfrac{k_p^2+\nu^2 k_{\omega}^2}{d+\nu} - (\nu+ r_r^{-1})k_\omega^2\right] }{2\left[(d+\nu + m\delta)\delta(d + r_r^{-1}) + (d+\nu)\lambda_i\right]}\; . \label{eq:31}
\end{align}

We now consider the case $\lambda_i=0$, i.e. $i=1$. Since $\lambda_1=0$, neither $\dot\omega'_1$, nor $\dot z'_1$, nor $y'_1$ depend on $\theta'_1$ in \eqref{eq:15}. Thus $\theta_i'$ is not observable and we can simplify the subsystem described by (\ref{eq:15}) to
% \begin{subequations} 
    \begin{align}\label{eq:32}
        \begin{bmatrix} \dot{\omega}_1' \\ \dot{z}_1' \end{bmatrix}
       =
       & \tilde A_1\begin{bmatrix} \omega_1' \\ z_1' \end{bmatrix} 
        + \tilde B_1\begin{bmatrix} w_{p,1}' \\ w_{\omega,1}' \end{bmatrix} \; ,\quad   
%     \end{align}
%     \begin{equation}
        y_1' = \tilde C_1 \begin{bmatrix} \omega_1' \\ z_1' \end{bmatrix} \;,
    \end{align}
%\end{subequations}
where, for the simplified subsystem, we have
\begin{subequations}
    \begin{align}
        \tilde A_1 &= \begin{bmatrix} -(d+\nu)/m & 1/m \\ \delta(\nu - r_r^{-1}) & -\delta \end{bmatrix} \;,\\
%         \end{equation}
%      \begin{equation}
           \tilde B_1 &= \begin{bmatrix} k_p/m & -\nu k_{\omega}/m \\ 0 & \delta (\nu - r_r^{-1})k_{\omega} \end{bmatrix} \;, 
  \quad
        \tilde C_1 = \begin{bmatrix} 1 & 0 \end{bmatrix}  \;.\label{eq:33b}
     \end{align}
\end{subequations}
% \begin{subequations}
%     \begin{equation}
%         \tilde A_1 = \begin{bmatrix} -(\nu + d)/m & 1/m \\ \delta(\nu - r_r^{-1}) & -\delta \end{bmatrix} \;,
%      \end{equation}
%      \begin{equation}
%         \tilde B_1 = \begin{bmatrix} k_p/m & -\nu k_\omega/m \\ 0 & \delta (\nu - r_r^{-1})k_\omega \end{bmatrix} \;, \label{eq:33b}
%      \end{equation}
%      \begin{equation}
%         \tilde C_1 = \begin{bmatrix} 1 & 0 \end{bmatrix}  \;.
%      \end{equation}
% \end{subequations}

Again, we need to solve the Lyapunov equation shown in (\ref{eq:17}), but here we use 
\begin{equation}
   Q = \begin{bmatrix} q_{11} & q_{12}\\ q_{12} & q_{22} \end{bmatrix} \; . \label{eq:34}
\end{equation}

Thus a similar computation as for the case $\lambda_i\not=0$ gives
\begin{align}
    & \|G_{\text{iDroop},i}\|^2_{H_2} = \dfrac{k_p^2+\nu^2 k_{\omega}^2}{2m(d+\nu)}\;  \nonumber\\
    & + \dfrac{\delta^2(\nu - r_r^{-1})\left[\dfrac{k_p^2+\nu^2 k_{\omega}^2}{d+\nu} - (\nu + r_r^{-1})k_{\omega}^2\right] }{2(d+\nu + m\delta)\delta(d + r_r^{-1})}. \label{eq:42}
\end{align}

Comparing (\ref{eq:31}) and (\ref{eq:42}), it is easy to find that if $\lambda_i$ in (\ref{eq:31})  equals to zero, (\ref{eq:31}) becomes (\ref{eq:42}). So we can conclude that no matter $\lambda_i$ equals to zero or not, $\|G_{\text{iDroop},i}\|^2_{H_2}$ can be represented by (\ref{eq:31}).
Therefore, we have $\|G_\text{iDroop}\|^2_{H_2} = \sum_{i=1}^n\|G_{\text{iDroop},i}\|^2_{H_2}$ which gives \eqref{eq:h2-idroop}.
% \begin{equation}
%     \|H_{iDroop}\|^2_{H_2} = \sum_{i=1}^n\|H_{iDroop,i}\|^2_{H_2} \; . 
% \end{equation}
\end{proof}

%\subsection{Low $\delta$ Region Simplifies the Situation}

%Since we can always choose a sufficiently small $\delta$ such that $f(\delta) \simeq n \alpha_5$, if we can find a $\nu$ such that $\alpha_5 < \|H_{DC,i}\|^2_{H_2}$, then iDroop will perform better than DC with that $\nu$ and a sufficiently small $\delta$.

\subsection{iDroop Performance Improvement}\label{ssec:performace-improvement}
We now show using Theorem \ref{th:h2-idroop} that iDroop can adapt to different network conditions and improve the system performance. Since virtual inertia has infinite $H_2$ norm (c.f. \eqref{eq:h2-dc}), we compare here the performance of iDroop, i.e., $\|G_{\text{iDroop}}\|_{H_2}$, with inverters implementing standard droop control, i.e., $\|G_\text{DC}\|_{H_2}$.
We achieve this goal by finding the set of parameter values $\nu$ and $\delta$ that minimizes \eqref{eq:h2-idroop} and showing that for such values $\|G_\text{iDroop}\|_{H_2}\leq \|G_\text{DC}\|_{H_2}$. The following auxiliary lemmas help us pave the way to the main result of this section (Theorem \ref{th:h2-improves}).

\begin{lemma} \label{lem:1}
If $\delta \to +\infty$, then $\|G_\text{iDroop}\|^2_{H_2} = \|G_\text{DC}\|^2_{H_2}$.
\end{lemma}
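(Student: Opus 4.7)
The plan is to take the limit $\delta\to+\infty$ directly in the closed-form expression \eqref{eq:h2-idroop} and verify that the result coincides with the droop norm in \eqref{eq:h2-dc}. The first summand $n(k_p^2+\nu^2k_\omega^2)/[2m(d+\nu)]$ is independent of $\delta$, so only the $\delta$-dependent sum requires care.

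First I would examine the $i$th term of the sum in \eqref{eq:h2-idroop}. Its numerator is $\delta^2$ times a quantity not depending on $\delta$, while the denominator expands to $2[m\delta^2(d+r_r^{-1})+(d+\nu)\delta(d+r_r^{-1})+(d+\nu)\lambda_i]$. Dividing numerator and denominator by $\delta^2$ and letting $\delta\to+\infty$, the $\lambda_i$ term and the $\delta$ term in the denominator vanish, so each summand tends to the same limit
\[
\frac{(\nu-r_r^{-1})\left[\dfrac{k_p^2+\nu^2k_\omega^2}{d+\nu}-(\nu+r_r^{-1})k_\omega^2\right]}{2m(d+r_r^{-1})},
\]
independent of $i$. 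In particular the limit exists and summing over $i=1,\dots,n$ just multiplies by $n$.

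Next I would combine this with the first summand of \eqref{eq:h2-idroop} by pulling out the common prefactor $n/(2m)$ and using $1+(\nu-r_r^{-1})/(d+r_r^{-1})=(d+\nu)/(d+r_r^{-1})$ on the $k_p^2+\nu^2k_\omega^2$ piece. The $(d+\nu)$ factors cancel, leaving
\[
\lim_{\delta\to+\infty}\|G_\text{iDroop}\|_{H_2}^2=\frac{n}{2m(d+r_r^{-1})}\Bigl[k_p^2+\nu^2k_\omega^2-(\nu^2-r_r^{-2})k_\omega^2\Bigr],
\]
and the $\nu^2k_\omega^2$ terms cancel, yielding precisely $\|G_\text{DC}\|_{H_2}^2$ from \eqref{eq:h2-dc}.

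There is no real obstacle here; the argument is a one-shot limit plus an algebraic simplification. The only thing to be slightly careful about is making sure the $\lambda_1=0$ term in the sum is handled on the same footing as $\lambda_i>0$, but since the derivation of \eqref{eq:h2-idroop} already unified the two cases at the end of the proof of Theorem~\ref{th:h2-idroop}, no separate treatment is needed.
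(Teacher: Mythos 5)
Your proposal is correct and follows essentially the same route as the paper: take $\delta\to+\infty$ in the closed-form expression \eqref{eq:h2-idroop} (the paper does this term-by-term in \eqref{eq:31}, obtaining a limit independent of $\lambda_i$, exactly as you note) and then simplify algebraically to recover $\|G_\text{DC}\|_{H_2}^2$. Your intermediate identity $1+(\nu-r_r^{-1})/(d+r_r^{-1})=(d+\nu)/(d+r_r^{-1})$ and the cancellation of the $\nu^2k_\omega^2$ terms check out, so no gap remains.
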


\begin{proof}
When $\delta \to +\infty$, (\ref{eq:31}) can be reduced as
\begin{align}
   &\underset{\delta\rightarrow+\infty}{\lim} \|G_{\text{iDroop},i}\|^2_{H_2} \nonumber\\
    & =\dfrac{k_p^2 + \nu^2 k_\omega^2}{2m(d+\nu)} + \dfrac{(\nu - r_r^{-1}) \{ k_p^2 -\left[\nu d + r_r^{-1}(d+\nu)\right] k_\omega^2 \}  }{2m (d+\nu) (d + r_r^{-1})}  \nonumber\\
%& = \dfrac{(d+\nu) k_p^2 + \{\nu^2 (d + r_r^{-1}) - (\nu - r_r^{-1}) \left[\nu d + r_r^{-1}(d+\nu)\right] \} k_\omega^2}{2m(d+\nu)(d + r_r^{-1})} \nonumber\\
%     & = \dfrac{k_p^2}{2m(d + r_r^{-1})} \nonumber\\
%     & + \dfrac{ \{\nu^2 (d + r_r^{-1}) - (\nu - r_r^{-1}) \left[\nu ( d + r_r^{-1}) + r_r^{-1}d\right] \} k_\omega^2}{2m(\nu + d)(d + r_r^{-1})} \nonumber\\
    & = \dfrac{k_p^2}{2m(d + r_r^{-1})} + \dfrac{ \left[\nu r_r^{-1} (d + r_r^{-1}) - (\nu - r_r^{-1}) r_r^{-1}d\right] k_\omega^2}{2m(d+\nu)(d + r_r^{-1})} \nonumber\\
    & = \dfrac{k_p^2}{2m(d + r_r^{-1})} + \dfrac{ \left(\nu r_r^{-2} + r_r^{-2}d\right) k_\omega^2}{2m(d+\nu)(d + r_r^{-1})} \nonumber\\
    & = \dfrac{k_p^2+\left(r_r^{-1}k_\omega \right)^2}{2m(d + r_r^{-1})} =: \|G_{\text{DC},i}\|^2_{H_2}\;, \label{eq:44}
\end{align}
where $\|G_{\text{DC},i}\|_{H_2}$ denotes the $H_2$ norm of the $i$th subsystem \eqref{eq:15} when droop control with gain $r_r$ is used instead of iDroop. Thus it follows that $\|G_\text{DC}\|_{H_2}^2=\sum_{i=1}^n\|G_{\text{DC},i}\|_{H_2}^2$ which gives \eqref{eq:h2-dc}.
\end{proof}

Thus we have shown that as $\delta \to +\infty$ $\|G_\text{iDroop}\|_{H_2}$ asymptotically converges to $\|G_\text{DC}\|_{H_2}$. Our second lemma will show that this convergence is monotonically from either above or below depending on the value of $\nu$.
For a given $\nu$, we use $f(\delta)$ to denote the dependence of $\|G_\text{iDroop}\|^2_{H_2}$ with respect to $\delta$, i.e. 
\begin{equation}\label{eq:f}
f(\delta) =\sum_{i=1}^n f_i(\delta) = n\alpha_5 +\sum_{i=1}^n  \dfrac{\alpha_1 \delta^2}{\alpha_2 \delta^2+\alpha_3 \delta +\alpha_4(\lambda_i)} \
    \end{equation}
with
\begin{subequations} \label{eq:49}
    \begin{align}
        \alpha_1 &= (\nu - r_r^{-1})\left[\dfrac{k_p^2+\nu^2 k_\omega^2}{d+\nu} - (\nu + r_r^{-1})k_\omega^2\right]\;,\label{{eq:alpha1}}\\
        \alpha_2 &= 2m(d + r_r^{-1})\;,\;
        \alpha_3 = 2(d+\nu)(d + r_r^{-1})\;,   \\
        \alpha_4(\lambda_i) &= 2(d+\nu)\lambda_i\;,\quad\;
        \alpha_5 = \dfrac{k_p^2+\nu^2 k_\omega^2}{2m(d+\nu)}\;. 
    \end{align}    
\end{subequations}

\begin{lemma} \label{lem:2}
For any positive $\delta$, $f(\delta)$ is a monotonically increasing or decreasing function if and only if $\alpha_1$ is positive or negative, respectively. That is, $\mathrm{sign} \left( f'(\delta) \right) = \mathrm{sign} \left( \alpha_1 \right)$.
\end{lemma}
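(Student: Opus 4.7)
The plan is to differentiate $f(\delta)$ termwise and show that the sign of the derivative is controlled entirely by the sign of $\alpha_1$, since every other factor that appears can be shown to be strictly positive for $\delta>0$. Concretely, I would start by writing
\begin{equation*}
f_i(\delta)=\dfrac{\alpha_1\delta^{2}}{\alpha_2\delta^{2}+\alpha_3\delta+\alpha_4(\lambda_i)},
\end{equation*}
and apply the quotient rule. After cancellation, the numerator simplifies to $\alpha_1\delta\bigl[\alpha_3\delta+2\alpha_4(\lambda_i)\bigr]$, so that
\begin{equation*}
f_i'(\delta)=\dfrac{\alpha_1\,\delta\,\bigl[\alpha_3\delta+2\alpha_4(\lambda_i)\bigr]}{\bigl[\alpha_2\delta^{2}+\alpha_3\delta+\alpha_4(\lambda_i)\bigr]^{2}}.
\end{equation*}

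Next I would argue that everything in this expression except $\alpha_1$ has a fixed sign for $\delta>0$. The denominator is a square and is nonzero because $\alpha_2>0$ and $\delta>0$. From the definitions in \eqref{eq:49}, and under the standing assumption that $m,d,r_r^{-1},\nu>0$ so that $d+\nu>0$ and $d+r_r^{-1}>0$, we have $\alpha_2>0$, $\alpha_3>0$, and $\alpha_4(\lambda_i)=2(d+\nu)\lambda_i\geq 0$ because $\lambda_i\geq 0$ is an eigenvalue of the graph Laplacian $L_B$. Consequently $\alpha_3\delta+2\alpha_4(\lambda_i)>0$ for every $\delta>0$ and every $i$, which forces
\begin{equation*}
\mathrm{sign}\bigl(f_i'(\delta)\bigr)=\mathrm{sign}(\alpha_1),\qquad i=1,\dots,n.
\end{equation*}

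Finally, since $f(\delta)=n\alpha_5+\sum_{i=1}^{n}f_i(\delta)$ and $\alpha_5$ does not depend on $\delta$, summing gives $\mathrm{sign}\bigl(f'(\delta)\bigr)=\mathrm{sign}(\alpha_1)$, which is exactly the claim; the two directions of the ``if and only if'' follow at once because a strictly signed derivative on $(0,\infty)$ is equivalent to strict monotonicity on that interval. There is no real obstacle here: the whole proof rests on the algebraic cancellation that eliminates the $\alpha_2\delta^{3}$ terms in the numerator of $f_i'$, combined with the positivity of the parameters $\alpha_2,\alpha_3$ and the nonnegativity of the Laplacian eigenvalues. The only mild subtlety worth flagging explicitly in the write-up is that at $\lambda_1=0$ the factor $\alpha_3\delta+2\alpha_4(\lambda_1)=\alpha_3\delta$ is still strictly positive for $\delta>0$, so the argument remains valid for the zero-eigenvalue mode.
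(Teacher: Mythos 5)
Your proposal is correct and follows essentially the same route as the paper: differentiate termwise, observe the cancellation of the $\alpha_2\delta^3$ terms so that $f_i'(\delta)=\alpha_1\delta\left[\alpha_3\delta+2\alpha_4(\lambda_i)\right]/\left[\alpha_2\delta^2+\alpha_3\delta+\alpha_4(\lambda_i)\right]^2$, and conclude from the positivity of $\alpha_2,\alpha_3$ and the nonnegativity of the Laplacian eigenvalues that the sign is dictated by $\alpha_1$. Your explicit remark about the $\lambda_1=0$ mode is a harmless addition; no gaps.
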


\begin{proof}
Using \eqref{eq:f} we compute the derivative of $f(\delta)$ which gives
\begin{align}
    f'(\delta)  % \nonumber\\
%    &= \sum_{i=1}^n\ \dfrac{2\alpha_1 \delta(\alpha_2 \delta^2+\alpha_3 \delta +\alpha_4(\lambda_i))-\alpha_1 \delta^2(2\alpha_2 \delta+\alpha_3)}{(\alpha_2 \delta^2+\alpha_3 \delta +\alpha_4(\lambda_i))^2} \ \nonumber\\
    &= \sum_{i=1}^n\ \alpha_1 \dfrac{ \alpha_3 \delta^2 +2 \alpha_4(\lambda_i) \delta}{(\alpha_2 \delta^2+\alpha_3 \delta +\alpha_4(\lambda_i))^2} \ \nonumber
\end{align}

From (\ref{eq:49}), we can easily see that $\alpha_2$ and $\alpha_3$ are positive since $m>0$, $d>0$, $\nu>0$, and $r_r>0$. Also, given that all the eigenvalues of the Laplacian matrix $L_B$ are nonnegative, $\alpha_4(\lambda_i)$ must be nonnegative. Thus, $\forall \delta >0$, $( \alpha_3 \delta^2 +2 \alpha_4(\lambda_i)\delta )/(\alpha_2 \delta^2+\alpha_3 \delta +\alpha_4(\lambda_i))^2 > 0$. So the sign of $f'(\delta)$ is determined by $\alpha_1$ for any $\delta > 0$.
\end{proof} 

Lemmas \ref{lem:1} and \ref{lem:2} suggest that in order to improve performance one needs to set $\nu$ such that $\alpha_1>0$ and $\delta$ as small as practically possible. The last lemma characterizes the optimal $\nu^*$ that minimizes the $H_2$ norm of iDroop when $\delta = 0$. 
%Under such scenario, $\|G_\text{iDroop}\|_{H_2}^2 = \|G_\text{DC}\|_{H_2}^2|_{r_r=\nu^{-1}}=n\alpha_5$.

\begin{lemma}\label{lem:3}
Let 
\begin{equation}\label{eq:g}
g(\nu) := \|G_\text{iDroop}\|_{H_2}^2(\nu,0)=\dfrac{n (k_p^2+\nu^2 k_\omega^2)}{2m(d+\nu)} 
\end{equation}
be the value of the $H_2$ norm of iDroop when $\delta=0$. Then, within the domain of interest $\nu\geq 0$, $g(\nu)$ is minimized by
\begin{equation}\label{eq:nu-star}
\nu^{\ast} = -d + \sqrt{d^2 + {k_p^2}/{k_\omega^2}}.
\end{equation}
\end{lemma}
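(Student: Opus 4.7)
The plan is to treat this as a one-dimensional, unconstrained-except-for-positivity minimization of the rational function $g(\nu)$, and to use the standard first-derivative-equals-zero criterion. Since $g$ is smooth on $\nu\geq 0$ and since the numerator of $g$ is a quadratic in $\nu$ while the denominator is affine and strictly positive, the stationary points of $g$ are precisely the real roots of a quadratic, and one of them will land in the admissible region.

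First, I would compute $g'(\nu)$ using the quotient rule. Ignoring the positive prefactor $n/(2m)$, the derivative has the form
\begin{equation*}
    g'(\nu) \;\propto\; \frac{2\nu k_\omega^2 (d+\nu) - (k_p^2 + \nu^2 k_\omega^2)}{(d+\nu)^2}.
\end{equation*}
Expanding and simplifying the numerator collapses it to
\begin{equation*}
    k_\omega^2\nu^2 + 2d k_\omega^2 \nu - k_p^2,
\end{equation*}
so the stationarity condition $g'(\nu)=0$ reduces to the quadratic $\nu^2 + 2d\nu - k_p^2/k_\omega^2 = 0$. Solving with the quadratic formula gives the two roots $\nu = -d \pm \sqrt{d^2 + k_p^2/k_\omega^2}$; only the $+$ branch is nonnegative, and it is nonnegative because $\sqrt{d^2 + k_p^2/k_\omega^2} \geq d$ as $k_p^2/k_\omega^2 \geq 0$. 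This yields the candidate $\nu^\ast = -d + \sqrt{d^2 + k_p^2/k_\omega^2}$ stated in the lemma.

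Second, I would verify that $\nu^\ast$ is actually a minimizer rather than a maximizer or saddle. The cleanest way is a sign analysis of the numerator polynomial $h(\nu) := k_\omega^2 \nu^2 + 2d k_\omega^2\nu - k_p^2$: $h(0)=-k_p^2 \leq 0$ and $h(\nu) \to +\infty$ as $\nu\to\infty$, so $h$ (and hence $g'$) changes sign from negative to positive exactly at $\nu^\ast$. Therefore $g$ is strictly decreasing on $[0,\nu^\ast]$ and strictly increasing on $[\nu^\ast,\infty)$, so $\nu^\ast$ is the unique global minimizer on $\nu\geq 0$. (Equivalently one could check $g(0)=nk_p^2/(2md)$ and $g(\nu)\to\infty$ as $\nu\to\infty$ and invoke smoothness plus uniqueness of the nonnegative stationary point.)

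There is no real obstacle here; the only thing to watch is to remember to discard the spurious negative root of the quadratic and to justify why the retained root is nonnegative (which follows immediately since $k_p^2/k_\omega^2\geq 0$). The whole proof is a short, direct calculation and should occupy only a few lines.
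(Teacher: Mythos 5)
Your proposal is correct and follows essentially the same route as the paper's proof: differentiate $g$ via the quotient rule, reduce stationarity to the quadratic $k_\omega^2\nu^2+2dk_\omega^2\nu-k_p^2=0$, discard the negative root, and confirm the minimizer by the sign change of $g'$ from negative to positive at $\nu^\ast$. No gaps.
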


\begin{proof}
We take the derivate of \eqref{eq:g} with respect to $\nu$ which gives
\begin{equation}
        g'(\nu) = \dfrac{k_\omega^2\nu^2 + 2 k_\omega^2 d\nu - k_p^2}{2m (d+\nu)^2}\;. \label{eq:43}
\end{equation}
By equating (\ref{eq:43}) to 0, we can solve the corresponding $\nu$ as $\nu^*_\pm=-d \pm \sqrt{d^2 + {k_p^2}/{k_\omega^2}}$. The only positive root is therefore $\nu^*=-d + \sqrt{d^2 + {k_p^2}/{k_\omega^2}}$.
%For simplicity, we denote this value as $\nu^{\ast}$. 
Since the denominator of \eqref{eq:43} is always positive and the highest order coefficient of the numerator is positive, whenever $0 < \nu < \nu^{\ast}$, then $ g'(\nu) < 0$, and if $\nu > \nu^{\ast}$, then $ g'(\nu) > 0$. Therefore, $\nu^{\ast}$ is actually the minimizer of $g(\nu)$ and $g({\nu^{\ast}})$ is the minimum.
\end{proof}

We are now ready to prove the main result of this section.
\begin{theorem}\label{th:h2-improves}
Let $\|G_\text{iDroop}\|^2_{H_2}(\nu,\delta)$ be the $(\nu,\delta)$-dependent $H_2$ norm of iDroop given in \eqref{eq:h2-idroop}. Let $\nu^*$ be as defined in \eqref{eq:nu-star}. 
Then, whenever $\left(k_p/k_\omega\right)^2 \neq 2r_r^{-1}d + r_r^{-2}$, setting 
\begin{equation}\label{eq:condition}
\nu \in [\nu^*,r_r^{-1}) \quad \text{or}\quad \nu \in (r_r^{-1},\nu^*],
\end{equation}
gives $\|G_\text{iDroop}\|^2_{H_2}(\nu,\delta)<\|G_\text{DC}\|^2_{H_2}$ for all $\delta \geq 0$. Moreover, $\|G_\text{iDroop}\|^2_{H_2}(\nu^*,0)$ provides the global minimum $H_2$ norm. When  $\left(k_p/k_\omega\right)^2 = 2r_r^{-1}d + r_r^{-2}$, then $\|G_\text{iDroop}\|^2_{H_2}(\nu^*=r_r^{-1},\delta)=\|G_\text{DC}\|^2_{H_2}$.
%Moreover, $\|H_{iDroop}\|^2_{H_2} \le \|H_{DC}\|^2_{H_2}$ whenever
\end{theorem}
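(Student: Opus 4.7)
The plan is to combine Lemmas \ref{lem:1}--\ref{lem:3} into a single sign analysis. Fixing $\nu$, Lemma \ref{lem:2} asserts that $f(\delta):=\|G_\text{iDroop}\|^2_{H_2}(\nu,\delta)$ is strictly monotonic on $\delta>0$ with $\mathrm{sign}(f'(\delta))=\mathrm{sign}(\alpha_1(\nu))$, while Lemma \ref{lem:1} pins down $\lim_{\delta\to\infty}f(\delta)=\|G_\text{DC}\|^2_{H_2}$. Hence whenever $\alpha_1(\nu)>0$, strict monotonicity forces $f(\delta)<\|G_\text{DC}\|^2_{H_2}$ for every finite $\delta\geq 0$. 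The first assertion of the theorem therefore reduces to verifying $\alpha_1(\nu)>0$ on the two intervals in \eqref{eq:condition}.

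To carry out that verification, I would write $\alpha_1=(\nu-r_r^{-1})\beta(\nu)$ with $\beta(\nu):=\tfrac{k_p^2+\nu^2 k_\omega^2}{d+\nu}-(\nu+r_r^{-1})k_\omega^2$. Clearing the denominator shows that $(d+\nu)\beta(\nu)=k_p^2-k_\omega^2[(d+r_r^{-1})\nu+r_r^{-1}d]$ is strictly decreasing in $\nu$, with the unique root $\nu_\beta=(k_p^2/k_\omega^2-r_r^{-1}d)/(d+r_r^{-1})$. The pivotal step is to substitute the identity $(k_p/k_\omega)^2=(\nu^*)^2+2d\nu^*$ (implicit in \eqref{eq:nu-star}) into $\nu_\beta-\nu^*$ and factor, yielding
\[
\nu_\beta-\nu^*=\dfrac{(\nu^*-r_r^{-1})(\nu^*+d)}{d+r_r^{-1}}.
\]
Since $\nu^*+d>0$, this identity places $\nu^*$ strictly on the same side of $r_r^{-1}$ as $\nu_\beta$ lies, with all three coinciding only in the degenerate case. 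Consequently, on $[\nu^*,r_r^{-1})$ (when $\nu^*<r_r^{-1}$) one has both $\nu-r_r^{-1}<0$ and $\nu>\nu_\beta$, so $\beta(\nu)<0$; on $(r_r^{-1},\nu^*]$ (when $\nu^*>r_r^{-1}$) one has both $\nu-r_r^{-1}>0$ and $\nu<\nu_\beta$, so $\beta(\nu)>0$. Either way $\alpha_1(\nu)>0$. I expect the main obstacle to be spotting the clean factorization of $\nu_\beta-\nu^*$, which is what makes the sign analysis collapse into a single line rather than a messy case study in $(k_p,k_\omega,d,r_r)$.

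For the global-minimum claim I would use a two-case split on the sign of $\alpha_1(\nu)$ for arbitrary $\nu\geq 0$. If $\alpha_1(\nu)\geq 0$, Lemma \ref{lem:2} yields $f(\delta)\geq f(0)=g(\nu)$ and Lemma \ref{lem:3} yields $g(\nu)\geq g(\nu^*)$. If $\alpha_1(\nu)<0$, strict decrease together with Lemma \ref{lem:1} gives $f(\delta)>\|G_\text{DC}\|^2_{H_2}=g(r_r^{-1})\geq g(\nu^*)$, again by Lemma \ref{lem:3}. In either case $\|G_\text{iDroop}\|^2_{H_2}(\nu,\delta)\geq g(\nu^*)=\|G_\text{iDroop}\|^2_{H_2}(\nu^*,0)$, establishing $(\nu^*,0)$ as a global minimizer. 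Finally, the degenerate case $(k_p/k_\omega)^2=2r_r^{-1}d+r_r^{-2}$ reduces to the explicit evaluation $\nu^*=-d+\sqrt{(d+r_r^{-1})^2}=r_r^{-1}$, which annihilates $\alpha_1$ and collapses $f(\delta)$ to the constant $g(r_r^{-1})=\|G_\text{DC}\|^2_{H_2}$, closing the proof.
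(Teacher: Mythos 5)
Your proof is correct, and its skeleton is the one forced by Lemmas \ref{lem:1}--\ref{lem:3}: reduce everything to the sign of $\alpha_1$, use monotone convergence to the DC norm from Lemmas \ref{lem:1} and \ref{lem:2}, and invoke Lemma \ref{lem:3} for the optimal $\nu$. Where you diverge from the paper is in the two verifications that carry the real content. For the location of $\nu^*$ relative to the roots of $\alpha_1$, the paper treats the numerator of $\alpha_1$ as a downward parabola with roots $\nu_1=r_r^{-1}$ and $\nu_2$ and then checks $\nu_1<\nu^*<\nu_2$ (and the reversed case) by direct manipulation of square-root inequalities; your factorization $\alpha_1=(\nu-r_r^{-1})\beta(\nu)$ with $(d+\nu)\beta(\nu)$ affine, combined with the identity $\nu_\beta-\nu^*=\tfrac{(\nu^*-r_r^{-1})(\nu^*+d)}{d+r_r^{-1}}$ (which I have checked; it follows from $(k_p/k_\omega)^2=(\nu^*)^2+2d\nu^*$), collapses both cases and the degenerate coincidence into one sign statement. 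For global minimality, the paper's reasoning is spread across the observations that $\alpha_1\le 0$ cannot beat DC and that $g$ is unimodal; your explicit dichotomy on $\operatorname{sign}(\alpha_1(\nu))$, bounding $f(\delta)$ below by $g(\nu)\ge g(\nu^*)$ in one branch and by $\|G_\text{DC}\|^2_{H_2}=g(r_r^{-1})\ge g(\nu^*)$ in the other, makes the globality over all $(\nu,\delta)$ airtight in a way the paper leaves partly implicit. The only point worth flagging in either treatment is the evaluation $f(0)=g(\nu)$: the $\lambda_1=0$ summand in \eqref{eq:f} is formally $0/0$ at $\delta=0$, and one should note that it is understood as the (vanishing) limit $\delta\to 0^+$, consistent with the paper's definition of $g$ in \eqref{eq:g}.
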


\begin{proof}
By Lemma 2, for a given $\nu$, if $\alpha_1 < 0$, then $f'(\delta) < 0$, which follows that $\|G_\text{iDroop}\|^2_{H_2}$ always decreases as $\delta$ increases. However, according to Lemma 1, even if $\delta \to \infty$, we can only obtain $\|G_\text{iDroop}\|^2_{H_2} = \|G_\text{DC}\|^2_{H_2}$. Similarly, if $\alpha_1 = 0$, then $f'(\delta) = 0$, which indicates that $\|G_\text{iDroop}\|^2_{H_2}$ keeps constant as $\delta$ increases, so no matter what the value of $\delta$ we use, we will always obtain $\|G_\text{iDroop}\|^2_{H_2} = \|G_\text{DC}\|^2_{H_2}$. Therefore, iDroop control cannot outperform DC when $\alpha_1 \le 0$. Thus we constrain to  $\alpha_1 > 0$ from now on. In this case, $f'(\delta) > 0$, which shows that $\|G_\text{iDroop}\|^2_{H_2}$ always increases as $\delta$ increases, so choosing $\delta$ arbitrarily small is optimal for fix $\nu$. 

We now look at the values of $\nu$ that satisfy the requirement  $\alpha_1 > 0$. $\alpha_1$ can be rearranged as follows:
\begin{align}
	\alpha_1 %&= (\nu - r_r^{-1})\left[\dfrac{k_p^2+\nu^2 k_\omega^2}{d+\nu} - (\nu + r_r^{-1})k_\omega^2\right] \nonumber\\
%     &= \dfrac{(\nu - r_r^{-1}) \{ k_p^2-[r_r^{-1}(d+\nu)+\nu d] k_\omega^2 \} }{d+\nu} \nonumber\\
%     &= \dfrac{(\nu - r_r^{-1}) \left[( k_p^2 -  k_\omega^2 r_r^{-1}d) -k_\omega^2 ( d + r_r^{-1}) \nu  \right] }{d+\nu} \nonumber\\
     &= \dfrac{\beta_2 \nu^2 + \beta_1\nu +\beta_0}{d+\nu} \nonumber\;,
\end{align}
where $\beta_2=- k_\omega^2 ( d + r_r^{-1})$, $\beta_1=( k_p^2 + k_\omega^2   r_r^{-2} )$, and $\beta_0=- r_r^{-1} (k_p^2- k_\omega^2 r_r^{-1} d )$. Since the denominator of $\alpha_1$ is always positive, the sign of $\alpha_1$ is only decided by its numerator. 
Let $N_{\alpha_1}(\nu)$ be the numerator of $\alpha_1$. Thus, $N_{\alpha_1}(\nu)$ is a univariate quadratic function in $\nu$, whose roots are:
\begin{subequations}
	\begin{align}
	\nu_1 = r_r^{-1} \nonumber \;,\quad \text{and} \quad
% 	\end{align}
%     \begin{align}
    \nu_2 =  \dfrac{\left(k_p/k_\omega\right)^2 - r_r^{-1} d }{d + r_r^{-1}} \nonumber\;.
    \end{align}
\end{subequations}
Since $\beta_2 < 0$, the graph of $N_{\alpha_1}(\nu)$ is a parabola that opens downwards. Therefore, if $\nu_1 < \nu_2$, then $\nu \in (\nu_1, \nu_2)$ guarantees $\alpha_1 > 0$; if $\nu_1 > \nu_2$, then $\nu \in (\nu_2, \nu_1)\cap[0, \infty)$ guarantees $\alpha_1 > 0$. Notably, if $\nu_1 = \nu_2$, there exists no feasible points of $\nu$ to make $\alpha_1 > 0$. This can only happen if $\left(k_p/k_\omega\right)^2 = 2r_r^{-1} d + r_r^{-2}$, and in this case $\nu_1=\nu^*=\nu_2=r_r^{-1}$ and therefore
iDroop can only match the performance of DC by setting  $\nu=r_r^{-1}$ and choosing any $\delta\geq 0$, i.e., $\|G_\text{iDroop}\|^2_{H_2}(\nu^*=r_r^{-1},\delta)=\|G_\text{DC}\|^2_{H_2}$. This shows the last statement of the theorem.
%But given that it is not likely to happen in reality that $\nu_1$ is exactly equal to $\nu_2$, which requires that $\left(k_p/k_\omega\right)^2 = 2r_r^{-1} d + r_r^{-2}$, we deem that $\nu_1 \neq \nu_2$ in general.

We now focus in the case where the set $$S=(\nu_1, \nu_2)\cup\{(\nu_2, \nu_1)\cap[0, \infty)\}$$ is nonempty. Thus,  for any fix $\nu\in S$, $\alpha_1>0$, and thus by Lemma \ref{lem:2} setting $\delta=0$ achieves the minimum norm, i.e.  $\|G_\text{iDroop}\|^2_{H_2}(\nu,0)<\|G_\text{iDroop}\|^2_{H_2}(\nu,\delta)$ $\forall \delta>0$.
Since by Lemma 3, $\nu^{\ast}$ is the minimizer of $g(\nu)=\|G_\text{iDroop}\|^2_{H_2}(\nu,0)$, as long as $\nu^{\ast}\in S$, $(\nu^*,0)$ globally minimizes $\|G_\text{iDroop}\|^{2}_{H_2}(\nu,\delta)$. In fact, we will show next that $\nu^*$ is always within $S$ whenever $S\not=\emptyset$.

First we consider the situation where $\nu_1 < \nu_2$, which implies that $\left(k_p/k_\omega\right)^2 > 2r_r^{-1} d + r_r^{-2}$. Then we have
\begin{align}
	\nu^{\ast} &= -d+ \sqrt{ d^2 + {k_p^2}/{k_\omega^2}}\! >\!-d\!+\!\sqrt{d^2 + 2r_r^{-1}d + r_r^{-2}}\nonumber\\
    &=-d+\sqrt{( d+r_r^{-1})^2}=r_r^{-1} = \nu_1\nonumber\;.
\end{align}
We also want to show that $\nu^{\ast} < \nu_2$ which holds iff \\
$\sqrt{d^2 + {k_p^2}/{k_\omega^2}} < \dfrac{\left(k_p/k_\omega\right)^2 - r_r^{-1} d }{d + r_r^{-1}} + d = \dfrac{\left(k_p/k_\omega\right)^2 + d^2 }{d + r_r^{-1}} $\\
$\iff 1 < \dfrac{\sqrt{d^2 + {k_p^2}/{k_\omega^2}} }{d + r_r^{-1}}$, which always holds since  $\left(k_p/k_\omega\right)^2 > 2r_r^{-1} d + r_r^{-2}$.
%\\ 
%$\dfrac{\sqrt{d^2 + {k_p^2}/{k_\omega^2}} }{d + r_r^{-1}}> \dfrac{\sqrt{d^2 + 2r_r^{-1}d + r_r^{-2}} }{d + r_r^{-1}} %= \dfrac{\sqrt{(d + r_r^{-1})^2} }{d + r_r^{-1}} 
%	= 1.$\\
Thus, $\nu_1 < \nu^{\ast} < \nu_2$.
Similarly, we can prove that in the situation where $\nu_1 > \nu_2$, $\nu_2 < \nu^{\ast} < \nu_1$ holds and thus $\nu^{\ast} \in (\nu_2, \nu_1)\cap[0, \infty)$.
Thus, it follows that $(\nu^*,0)$ is the global minimizer of $\|G_\text{iDroop}\|^{2\ast}_{H_2}(\nu,\delta)$.

Finally, from \eqref{eq:h2-dc} and \eqref{eq:h2-idroop}, it is easy to see that 
$\|G_\text{DC}\|^{2}_{H_2}=\|G_\text{iDroop}\|^{2}_{H_2}(r_r^{-1},0)$.
From the proof of Lemma \ref{lem:3} it follows that when $\nu<\nu^*$ (resp. $\nu>\nu^*$)
then $\|G_\text{iDroop}\|^{2}_{H_2}(\nu,0)=g(\nu)$ decreases (resp. increases) monotonically.
Therefore, whenever $\nu$ satisfies \eqref{eq:condition} we have
$$\|G_\text{DC}\|^{2}_{H_2}=\|G_\text{iDroop}\|^{2}_{H_2}(r_r^{-1},0)>\|G_\text{iDroop}\|^{2}_{H_2}(\nu,0).$$
Result follows.
%If we can find some $\nu$ located between $\nu_1$ and $\nu_2$ such that $n\alpha_5 < \|G_\text{DC}\|^2_{H_2}$, then iDroop will perform better than DC with that $\nu$ and a sufficiently small $\delta \to 0$. We claim that $\nu^{\ast}$ is such a $\nu$, and in fact the best $\nu$ we can choose. It is easy to observe that actually $\|G_\text{DC}\|^2_{H_2} = n\alpha_5\lvert_{r_r^{-1}}$. Obviously, $n\alpha_5\lvert_{v^{\ast}} < n\alpha_5\lvert_{r_r^{-1}}$  whenever $r_r^{-1} \neq \nu^{\ast}$, since $v^{\ast}$ optimizes $n\alpha_5$. However, $r_r^{-1} = \nu^{\ast}$ occurs only if $\left(k_p/k_\omega\right)^2 = 2r_r^{-1}d + r_r^{-2}$, which is not likely to happen as mentioned above. Hence, we can conclude that $\|G_\text{iDroop}\|^{2\ast}_{H_2} < \|G_\text{DC}\|^2_{H_2}$ whenever $\left(k_p/k_\omega\right)^2 \neq 2r_r^{-1}d + r_r^{-2}$. 
\end{proof}
%It is possible that we are not that ambitious to obtain the optimal performance from iDroop control and what we want to do is just get a better performance from iDroop control than DC. If that is the case, after setting $\delta \to 0$, as long as $r_r^{-1} \neq \nu^{\ast}$, i.e. $\left(k_p/k_\omega\right)^2 \neq 2r_r^{-1}d + r_r^{-2}$, there must exist a range of value of $\nu$ that we can choose. To be precise, we simply need to compare $r_r^{-1}$ with $\nu^{\ast}$. If $r_r^{-1} < \nu^{\ast}$, then $\left(k_p/k_\omega\right)^2 > 2r_r^{-1}d + r_r^{-2}$, and then $\nu_1 < \nu_2$, so $\nu \in (r_r^{-1} , \nu^{\ast}) \cap(\nu_1 , \nu_2)=(r_r^{-1} , \nu^{\ast})$ is an acceptable region; if $r_r^{-1} > \nu^{\ast}$, then $\left(k_p/k_\omega\right)^2 < 2r_r^{-1}d + r_r^{-2}$, and then $\nu_1 > \nu_2$, so $\nu \in (\nu^{\ast} , r_r^{-1})\cap(\nu_2 , \nu_1)\cap[0,\infty)=(\nu^{\ast} , r_r^{-1})$ is an acceptable region.

Theorem \ref{th:h2-improves} shows that in order to optimally improve the performance iDroop needs to first set $\delta$ arbitrarily close to zero. Interestingly, this implies that the the transfer function $c_i(s)\simeq\nu$ except for $c_i(0)=r_r^{-1}$. In other words, iDroop uses its first order lead/lag property to effectively decouple the zero-frequency gain $c_i(0)$ from all the other frequencies $c_i(j\omega_0)=\simeq \nu$. This decouple is particularly easy to understand in two special regimes: (i) If $k_p\ll k_\omega$, the system is dominated by frequency measurements and therefore $\nu^{\ast} \simeq 0$. In this case,  we have $0\simeq\nu^*<r_r^{-1}$ which makes iDroop a lag compensator. Thus, by using lag compensation (setting $\nu<r_r^{-1}$) iDroop can attenuate frequency noise; 
(ii) If $k_p \gg k_\omega$, the system is dominated by power disturbances and $\nu^\ast \simeq k_p/k_\omega\gg 1$. Thus, for $k_p/k_\omega$ large enough, $\nu^*>r_r^{-1}$ and thus iDroop can use lead compensation (setting $\nu>r_r^{-1}$) to help mitigate power disturbances. 

%
%\begin{enumerate}%[indent=0pt]
%\item[(i)] If $k_p\ll k_\omega$, the system is dominated by frequency measurements and therefore $\nu^{\ast} \simeq 0$. That is, the main objective of iDroop is to attenuate frequency noise.
%\item[(ii)] If $k_p \gg k_\omega$, the system is dominated by power disturbances and $\nu^\ast \simeq k_p/k_\omega\gg 1$. Thus, iDroop can increase the gain to help mitigate power disturbances. 
%\end{enumerate}

%If $k_p \gg k_\omega$, then $\nu_2 \to \infty$, so $\nu \in (\nu_1, \infty)$ guarantees $\alpha_1 > 0$; if $k_\omega \gg k_p$, then $\nu_2 \simeq - r_r^{-1}( d) / (d + r_r^{-1}) < 0$, so $\nu \in [0,\nu_1)$ guarantees $\alpha_1 > 0$.

%%%%%%%%%%%%%%%%%%%%%%%%%%%%%%%%%%%%%%%%%%%%%%%%%%%%%%%%%%%%%%%%%%%%%%%%%%%%%%%%
% Delay Robustness
%%%%%%%%%%%%%%%%%%%%%%%%%%%%%%%%%%%%%%%%%%%%%%%%%%%%%%%%%%%%%%%%%%%%%%%%%%%%%%%%
%!TEX root = main.tex
\section{Delay Robustness}\label{sec:delay}

\subsection{Stability under Delays}

As discussed in Section~\ref{ssec:perfomrance-measures}, calculating the delay robustness amounts to computing the largest delays $\tau_i$ that maintain stability of the network model. In the heterogeneous inverter setting this can be addressed in a decentralized manner using Theorem 3.2 from \cite{pm2017ifac-wc}. To give simpler and more interpretable formulae that can be directly compared with those for $H_2$ performance, here we instead consider the homogeneous parameter setting. In this case stability criteria for any inverter design can be obtained directly from the multivariable Nyquist criterion of \cite{DW80}.

\begin{corollary}[\cite{DW80}]\label{cor:1}
Assuming homogeneous parameters and that $c\s$ is stable, the network model is stable if and only if
\[
\sum_{i=1}^{n}\text{w.n.o.}\frac{sc\s{}e^{-s\tau}}{ms^2+ds+\lambda{}_i}=0.
\]
In the above $\text{w.n.o.}$ denotes the winding number about the $-1$ point of the given transfer functions as evaluated on the usual Nyquist contour.
\end{corollary}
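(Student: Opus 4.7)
My plan is to invoke the multivariable Nyquist criterion of~\cite{DW80} after exploiting homogeneity to diagonalize the Laplacian $L_B$, thereby reducing the $n$-bus closed loop to $n$ decoupled SISO loops whose individual Nyquist tests combine into the stated sum condition.

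First, I would write the characteristic equation of the closed loop in matrix form. Taking Laplace transforms of the state-space model of Section~\ref{ssec:model} with $c_i(s)$ replaced by $c(s)e^{-s\tau}$, homogeneity yields
$$\left[(ms + d + c(s)e^{-s\tau})\,I + s^{-1} L_B\right]\hat{\omega}(s) = \hat{p}^{in}(s).$$
Multiplying the bracketed matrix by $s$ and using the orthogonal decomposition $L_B = U \Gamma U^T$ with $\Gamma = \diag{\lambda_1,\dots,\lambda_n}$, the characteristic polynomial of the closed loop is, up to a factor $s^n$, equal to
$$\prod_{i=1}^{n} \bigl(ms^2 + ds + \lambda_i + s c(s) e^{-s\tau}\bigr) = \prod_{i=1}^{n} (ms^2+ds+\lambda_i)\,\bigl(1 + L_i(s)\bigr),$$
where $L_i(s) := s c(s) e^{-s\tau}/(ms^2 + ds + \lambda_i)$. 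Closed-loop stability is therefore equivalent to each return difference $1 + L_i(s)$ having no right-half-plane zeros.

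Next, I would check that no $L_i(s)$ carries an open-loop RHP pole. For $\lambda_i > 0$ the polynomial $ms^2 + ds + \lambda_i$ has both roots strictly in the open left half plane since $m,d > 0$; for the zero mode $\lambda_1 = 0$ the numerator factor of $s$ cancels the corresponding pole at the origin, leaving only the stable pole $-d/m$. Combined with the standing assumption that $c(s)$ is stable, each $L_i(s)$ is open-loop stable.

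Finally, by the multivariable Nyquist criterion of~\cite{DW80}, the number of closed-loop RHP poles equals the total number of open-loop RHP poles of the return-ratio matrix minus the sum of the winding numbers of its characteristic loci about $-1$. In these diagonal coordinates the characteristic loci are precisely the scalar loops $L_i(s)$, and the open-loop RHP-pole count is zero, so stability is equivalent to $\sum_{i=1}^n \text{w.n.o.}\, L_i(s) = 0$, which is the stated condition. I expect the main subtlety to be the bookkeeping for the $\lambda_1 = 0$ mode: one must justify that the standard indentation of the Nyquist contour around $s=0$ is compatible with the pole-zero cancellation, so that the winding-number contribution of that mode is unambiguously defined and the residual factor $s^n$ above does not spuriously inflate the RHP-pole count.
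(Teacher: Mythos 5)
Your proposal is correct and takes essentially the same route as the paper: exploit homogeneity to diagonalize $L_B$, reduce the closed loop to the $n$ scalar loops $sc(s)e^{-s\tau}/(ms^2+ds+\lambda_i)$, and invoke the multivariable Nyquist criterion of \cite{DW80} with zero open-loop right-half-plane poles. The only cosmetic difference is that the paper establishes open-loop stability by restructuring the feedback around $\bar{P}(s)\left(I_n+\tfrac{1}{s}L_B\bar{P}(s)\right)^{-1}$ and citing passivity, whereas you verify it factor-by-factor via Routh--Hurwitz together with the pole--zero cancellation at the origin for the $\lambda_1=0$ mode --- a point you rightly flag and which the paper handles implicitly through the same passivity claim.
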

\begin{proof}
First observe that $P\s=\bar{P}\s\funof{I_n+C\bar{P}\s}^{-1}$, where $\bar{P}=\frac{1}{ms+d}I$ and $C\s=c\s{}I$. Therefore the network model is equivalent to $\bar{P}\s$ in negative feedback with $\frac{1}{s}L_B$ and $C\s$. By closing the network feedback loop, it then follows that the model is stable if and only if the negative feedback interconnection of $\bar{P}\s\funof{I_n+\frac{1}{s}L_B\bar{P}\s}^{-1}$ and $C\s$ is stable. Provided $d>0$, it is easily shown that $\bar{P}\s\funof{I_n+\frac{1}{s}L_B\bar{P}\s}^{-1}$ is stable (using e.g. a passivity argument), and so by the multivariable Nyquist criterion, the network model is stable if and only if
\[
\text{w.n.o.}\;\lambda\funof{C\s\bar{P}\s\funof{I_n+\frac{1}{s}L_B\bar{P}\s}^{-1}}=0.
\]
In the above $\text{w.n.o.}\;\lambda\funof{\cdot}$ denotes the winding number of the eigenloci of the given matrix, evaluated on the usual Nyquist contour; see \cite{DW80} for details. Since $C\s,\bar{P}\s$ and $L_B$ are all normal, commuting matrices, the eigenvalues of $C\s\bar{P}\s\funof{I_n+\frac{1}{s}L_B\bar{P}\s}^{-1}$ are easily shown to be
\[
\frac{sc\s{}e^{-s\tau}}{ms^2+ds+\lambda{}_i},
\]
from which the result follows immediately.
\end{proof}

\subsection{Delay Robustness of iDroop}

Corollary~\ref{cor:1} shows that in order to calculate the delay robustness, we need to calculate the largest $\tau$ such that a winding number condition is satsfied. This is straightforward to do numerically, however to facilitate comparison with $H_2$ performance analysis, we will explicitly compute the delay robustness when the iDroop controller is used in the two extreme regimes $\delta=0$ and $\delta\rightarrow{}\infty$.

\begin{theorem}[Delay Robustness of iDroop]\label{thm:delay}
Define
\[
\omega_n\funof{x}=\sqrt{\sqrt{x^2+\frac{2x\lambda_n}{m}}+x+\frac{\lambda_n}{m}}.
\]
Assuming homogeneous parameters, then:
\begin{enumerate}[(i)]
\item If $\delta=0$ and $d<\nu$, iDroop's delay robustness equals
\[
\tau_{\text{rob}}=\frac{1}{\omega_n\funof{\frac{1}{2m^2}\funof{\nu^2-d^2}}}\arccos\funof{-\frac{d}{\nu}}.
\]
\item If $\delta\rightarrow{}\infty$ and $d<r_r^{-1}$, then
\[
\tau_{\text{rob}}=\frac{1}{\omega_n\funof{\frac{1}{2m^2}\funof{r_r^{-2}-d^2}}}\arccos\funof{-\frac{d}{r_r^{-1}}}.
\]
\end{enumerate}
\end{theorem}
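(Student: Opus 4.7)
The plan is to exploit the observation that in both limiting regimes, the iDroop transfer function degenerates into a constant gain: for $\delta = 0$ one has $c(s) = \nu s / s = \nu$, and for $\delta \to \infty$ the transfer function $(\nu s + \delta r_r^{-1})/(s+\delta)$ converges pointwise to $r_r^{-1}$. Thus both cases reduce to applying Corollary~\ref{cor:1} with $c(s)$ replaced by a constant $k$, namely $k = \nu$ in case (i) and $k = r_r^{-1}$ in case (ii). The stability test becomes a winding-number condition for each of the scalar transfer functions $G_i(s) e^{-s\tau}$ with $G_i(s) = sk/(ms^2 + ds + \lambda_i)$, so the delay robustness equals the minimum over $i$ of the classical phase-margin-over-crossover-frequency of $G_i$.

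I would first dispatch the nominal case $\tau = 0$. Rationalizing, the real part of $G_i(j\omega)$ is $dk\omega^2$ divided by a positive denominator, and hence is strictly positive for all $\omega > 0$. The Nyquist plot of $G_i$ therefore lies in the open right-half plane, the nominal winding number vanishes, and nominal stability is confirmed. As $\tau$ grows, the delay rotates each point of the plot clockwise by $\omega\tau$ without changing its magnitude; instability first appears at a frequency where simultaneously $|G_i(j\omega)| = 1$ and $\omega\tau = \pi + \angle G_i(j\omega)$.

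Next I would solve $|G_i(j\omega)|^2 = 1$ explicitly. The condition reduces to $(m\omega^2 - \lambda_i)^2 = (k^2 - d^2)\omega^2$, which has real positive roots precisely when $k > d$---exactly matching the hypotheses $d < \nu$ and $d < r_r^{-1}$. Taking the larger root yields
\[
\omega_{c,i} = \frac{\sqrt{k^2 - d^2} + \sqrt{k^2 - d^2 + 4m\lambda_i}}{2m}.
\]
At this $\omega_{c,i}$, the point $G_i(j\omega_{c,i})$ lies in the fourth quadrant (positive real part and, since $m\omega_{c,i}^2 > \lambda_i$, negative imaginary part), and a short trigonometric manipulation yields $\angle G_i(j\omega_{c,i}) = -\arccos(d/k)$. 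Hence the phase margin is $\pi - \arccos(d/k) = \arccos(-d/k)$, \emph{independent of $\lambda_i$}, and the delay margin at eigenvalue $\lambda_i$ is $\arccos(-d/k)/\omega_{c,i}$. Since $\omega_{c,i}$ is strictly increasing in $\lambda_i$, the binding eigenvalue is $\lambda_n$, giving $\tau_{\text{rob}} = \arccos(-d/k)/\omega_{c,n}$. A direct algebraic check---squaring the closed form for $\omega_{c,n}$ and substituting $x = (k^2 - d^2)/(2m^2)$---reproduces $\omega_n(x)^2 = \sqrt{x^2 + 2x\lambda_n/m} + x + \lambda_n/m$, thereby completing both parts of the theorem.

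The main obstacle is the phase-margin computation itself: one must carefully select the larger of the two gain-crossover frequencies (the smaller one has positive phase and yields a strictly larger delay margin, and so is not binding) and correctly identify the quadrant of $G_i(j\omega_{c,i})$ in order to pin down the sign inside the $\arccos$. Once this is handled, everything else is routine bookkeeping and algebraic verification.
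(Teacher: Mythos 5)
Your proposal is correct and follows essentially the same route as the paper's proof: reduce both limits to a constant gain $a$, apply Corollary~1, observe that the phase margin $\arccos(-d/a)$ is the same for every mode, solve the gain-crossover (fourth-order) equation explicitly, and use monotonicity in $\lambda_i$ to identify $\lambda_n$ as the binding mode. Your write-up actually supplies more of the routine details (the quadrant analysis, the discarding of the smaller crossover frequency, and the algebraic match with $\omega_n(x)$) than the paper does.
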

\begin{proof}
Observe that in both limiting cases,
\[
c\jw\rightarrow{}a,\forall{}\omega\in\R,
\]
where $a>0$ (for $\delta=0$, $a=\nu$, and for $\delta\rightarrow{}\infty{}$, $a=r_r^{-1}$). Therefore to prove the result, by Corollary~\ref{cor:1}, we need a method to calculate the winding numbers of the transfer functions
\begin{equation}\label{eq:simplenyq}
\frac{sae^{-s\tau}}{ms^2+ds+\lambda{}_i}.
\end{equation}
Observe that if $\tau\equiv{}0$, then the Nyquist diagrams for all these transfer functions are circles that cut the real axis at the origin and $\frac{a}{d}$. This implies that the phase margin $\phi$ for all these transfer functions are the same, and the winding number condition is satsified if and only if
\[
\max_{i\in{}V}\omega_i\tau<\phi.
\]
In the above, $\omega_i$ is the largest frequency at which the \emph{i}th transfer function has magnitude one (the frequency used to calculate the phase margin). A simple calculation shows that
\[
\cos\funof{\pi-\phi}=\frac{d}{a}.
\]
To complete the proof we then only need to find the frequencies $\omega_i$; that is solve the equation
\[
\abs{j\omega{}a}=\abs{-m\omega^2+dj\omega+\lambda{}_i}.
\]
This can be done by solving a fourth order polynomial equation. The general solution is given by
\[
\omega_i=\sqrt{\sqrt{x^2+\frac{2x\lambda_i}{m}}+x+\frac{\lambda_i}{m}},
\]
where $x=\frac{1}{2m^2}\funof{a^2-d^2}$. Since the above is monotonically inreasing in $\lambda_i$, the result follows.
\end{proof}

Although the above precisely gives $\tau_\text{rob}$ for iDroop in terms of the network parameters, it is a little hard to interpret. However if we assume that $d=0$, and replace $\omega_n\funof{x}$ with the upper bound
\[
\omega_n\funof{x}\leq{}\sqrt{2}\sqrt{x+\frac{m}{\lambda_n}},
\]
we get the following much simplified lower bound for $\tau_\text{rob}$:
\[
\tau_{\text{rob}}\geq{}\frac{m\pi}{2\sqrt{a^2+2m\lambda_n}}.
\]
In the above $a=\nu$ when $\delta=0$, and $a=r_r^{-1}$ when $\delta\rightarrow{}\infty$. We therefore clearly see that the delay robustness may be improved using iDroop by setting $\nu$ small, though this should be balanced with the desired objectives in the $H_2$ norm.
In particular, in systems where power disturbances dominates over frequency noise, jointly improving $H_2$ and delay robustness may be not possible.

\begin{remark}
In the case that $\nu\leq{}d$ or $r_r^{-1}\leq{}d$ (that is when we are outside the scope of Theorem~\ref{thm:delay}), the delay robusntess of iDroop is arbitrarily large. In such a regime, iDroop is stable when subject to delays of any size.
\end{remark}

%%%%%%%%%%%%%%%%%%%%%%%%%%%%%%%%%%%%%%%%%%%%%%%%%%%%%%%%%%%%%%%%%%%%%%%%%%%%%%%%
\section{Conclusions and Future Works}
This work aims to study the effect of using gird connected inverters to mitigate the dynamic degradation being experienced by the power grid. Our analysis shows that using dynamically controlled inverters like iDroop that drift away from the traditional virtual inertia and droop control mechanisms can provide a more efficient use of inverters that can further improve the power system performance in the presence of power disturbances and frequency noise, while providing larger robustness margins to delay. Future works include an extension of our analysis for heterogeneous system parameters as well as thorough experimental evaluation with detailed network models.

%\subsection{Conclusions}

%\subsection{Future Works}

\bibliographystyle{IEEEtran}
\bibliography{Refs}

% Generated by IEEEtran.bst, version: 1.14 (2015/08/26)
\begin{thebibliography}{10}
\providecommand{\url}[1]{#1}
\csname url@samestyle\endcsname
\providecommand{\newblock}{\relax}
\providecommand{\bibinfo}[2]{#2}
\providecommand{\BIBentrySTDinterwordspacing}{\spaceskip=0pt\relax}
\providecommand{\BIBentryALTinterwordstretchfactor}{4}
\providecommand{\BIBentryALTinterwordspacing}{\spaceskip=\fontdimen2\font plus
\BIBentryALTinterwordstretchfactor\fontdimen3\font minus
  \fontdimen4\font\relax}
\providecommand{\BIBforeignlanguage}[2]{{%
\expandafter\ifx\csname l@#1\endcsname\relax
\typeout{** WARNING: IEEEtran.bst: No hyphenation pattern has been}%
\typeout{** loaded for the language `#1'. Using the pattern for}%
\typeout{** the default language instead.}%
\else
\language=\csname l@#1\endcsname
\fi
#2}}
\providecommand{\BIBdecl}{\relax}
\BIBdecl

\bibitem{Milligan:2015ju}
M.~Milligan, B.~Frew, B.~Kirby, M.~Schuerger, K.~Clark, D.~Lew, P.~Denholm,
  B.~Zavadil, M.~O'Malley, and B.~Tsuchida, ``{Alternatives No More Wind and
  Solar Power Are Mainstays of a Clean, Reliable, Affordable Grid},''
  \emph{IEEE Power and Energy Magazine}, vol.~13, no.~6, pp. 78--87, 2015.

\bibitem{Winter:2015dy}
W.~Winter, K.~Elkington, and G.~Bareux, ``{Pushing the Limits: Europe's New
  Grid: Innovative Tools to Combat Transmission Bottlenecks and Reduced
  Inertia},'' \emph{IEEE Power and Energy Magazine}, 2015.

\bibitem{EasternInterconnect:2013wq}
{North American Electric Reliability Corporation}, ``{Eastern Interconnection
  Frequency Initiative Whitepaper},'' Dec. 2013.

\bibitem{WoodWollenberg1996}
A.~J. Wood, B.~F. Wollenberg, and G.~B. Sheble, ``{Power generation, operation
  and control},'' \emph{John Wiley{\&}Sons}, 1996.

\bibitem{Boemer:2010wa}
J.~Boemer, K.~Burges, C.~Nabe, and M.~P{\"o}ller, \emph{{All Island TSO
  Facilitation of Renewables Studies}}.\hskip 1em plus 0.5em minus 0.4em\relax
  EirGrid and SONI, 2010.

\bibitem{Kirby:2005uy}
B.~J. Kirby, ``{Frequency Regulation Basics and Trends},'' Oak Ridge National
  Laboratory, Tech. Rep., Jan. 2005.

\bibitem{Spolar:2012vj}
{North American Electric Reliability Corporation}, ``{Frequency Response
  Initiative Report},'' pp. 1--212, Oct. 2012.

\bibitem{FederalEnergyRegulatoryCommission:2016wz}
{Federal Energy Regulatory Commission}, ``{Essential Reliability Services and
  the Evolving Bulk-Power System{\textemdash}Primary Frequency Response },''
  Nov. 2016.

\bibitem{Driesen:ft}
J.~Driesen and K.~Visscher, ``{Virtual synchronous generators},'' in \emph{IEEE
  Power and Energy Society General Meeting: Conversion and Delivery of
  Electrical Energy in the 21st Century, PES}.\hskip 1em plus 0.5em minus
  0.4em\relax IEEE, New York, United States, Sep. 2008.

\bibitem{m2016cdc}
E.~Mallada, ``{iDroop: A dynamic droop controller to decouple power grid's
  steady-state and dynamic performance},'' in \emph{55th IEEE Conference on
  Decision and Control (CDC)}, 2016.

\bibitem{Paganini:gz}
F.~Paganini, Z.~Wang, J.~C. Doyle, and S.~H. Low, ``{Congestion control for
  high performance, stability, and fairness in general networks},''
  \emph{IEEE/ACM Transactions on Networking}, vol.~13, no.~1, pp. 43--56.

\bibitem{pm2017ifac-wc}
\BIBentryALTinterwordspacing
R.~Pates and E.~Mallada, ``Decentralized robust inverter-based control in power
  systems,'' in \emph{IFAC World Congress}, accepted, iFAC World Congress.
  [Online]. Available:
  \url{https://mallada.ece.jhu.edu/pubs/2017-PM-Preprint.pdf}
\BIBentrySTDinterwordspacing

\bibitem{Ciobotaru:2008kp}
M.~Ciobotaru, R.~Teodorescu, and V.~G. Agelidis, ``{Offset rejection for PLL
  based synchronization in grid-connected converters},'' in \emph{Applied Power
  Electronics Conference and Exposition, 2008. APEC 2008. Twenty-Third Annual
  IEEE}.\hskip 1em plus 0.5em minus 0.4em\relax IEEE, 2008, pp. 1611--1617.

\bibitem{Blaabjerg:2006cea}
F.~Blaabjerg, R.~Teodorescu, M.~Liserre, and A.~V. Timbus, ``{Overview of
  Control and Grid Synchronization for Distributed Power Generation Systems},''
  \emph{IEEE Transactions on Industrial Electronics}, vol.~53, no.~5, pp.
  1398--1409, 2006.

\bibitem{Doyle:1989kf}
J.~C. Doyle, K.~Glover, P.~P. Khargonekar, and B.~A. Francis, ``{State-space
  solutions to standard H$_{2}$ and H$_{\infty}$ control problems},''
  \emph{Automatic Control, IEEE Transactions on}, vol.~34, no.~8, pp. 831--847,
  Aug. 1989.

\bibitem{Tegling:2015ef}
E.~Tegling, B.~Bamieh, and D.~F. Gayme, ``{The Price of Synchrony: Evaluating
  the Resistive Losses in Synchronizing Power Networks},'' \emph{IEEE
  Transactions on Control of Network Systems}, vol.~2, no.~3, pp. 254--266,
  2015.

\bibitem{Poolla:2015vq}
B.~K. Poolla, S.~Bolognani, and F.~D{\"o}rfler, ``{Optimal Placement of Virtual
  Inertia in Power Grids},'' Oct. 2015.

\bibitem{Tegling:2016wna}
E.~Tegling, M.~Andreasson, J.~W. Simpson-Porco, and H.~Sandberg, ``{Improving
  performance of droop-controlled microgrids through distributed PI-control},''
  Jan. 2016.

\bibitem{DW80}
C.~Desoer and Y.~Wang, ``On the generalized nyquist stability criterion,''
  \emph{{IEEE} Transactions on Automatic Control}, vol.~25, pp. 187--196, 1980.

\end{thebibliography}

%%%%%%%%%%%%%%%%%%%%%%%%%%%%%%%%%%%%%%%%%%%%%%%%%%%%%%%%%%%%%%%%%%%%%%%%%%%%%%%%
%\section{ACKNOWLEDGMENTS}

%%%%%%%%%%%%%%%%%%%%%%%%%%%%%%%%%%%%%%%%%%%%%%%%%%%%%%%%%%%%%%%%%%%%%%%%%%%%%%%%

\end{document}